\documentclass[amsmath,11pt]{article}
\usepackage{times}
\usepackage{fullpage}
\usepackage{hyperref,microtype,pdfsync}
\usepackage{amsmath,amsfonts,amssymb,amsthm}
\usepackage{nicefrac}
\usepackage{url}
\usepackage{enumitem}
\usepackage{algorithm}
\usepackage[noend]{algorithmic}


\newif\ifnotes\notesfalse

\ifnotes

\newcommand{\dnote}[1]{{\bf (Daniel:} {#1}{\bf ) }}
\newcommand{\snote}[1]{{\bf (Santosh:} {#1}{\bf ) }}

\else

\newcommand{\dnote}[1]{}
\newcommand{\snote}[1]{}

\fi


\theoremstyle{plain}            
\newtheorem{theorem}{Theorem}[section]
\newtheorem{lemma}[theorem]{Lemma}
\newtheorem{corollary}[theorem]{Corollary}

\theoremstyle{definition}       

\theoremstyle{remark}           

\numberwithin{equation}{section}


\DeclareMathOperator{\poly}{poly}

\DeclareMathOperator*{\E}{E}









\newcommand{\R}{\ensuremath{\mathbb{R}}}

\newcommand{\Z}{\ensuremath{\mathbb{Z}}}


\newcommand{\pr}[2]{\langle{#1, #2}\rangle}

\newcommand{\abs}[1]{\lvert{#1}\rvert}

\newcommand{\set}[1]{\{{#1}\}}




\newcommand{\round}[1]{\lfloor{#1}\rceil}



\newcommand{\length}[1]{\lVert{#1}\rVert}


\def\eps{\epsilon}

\def\vol{\mathrm{vol}}




\title{Deterministic Construction of an Approximate M-Ellipsoid and its Application to Derandomizing Lattice Algorithms}

\author{Daniel Dadush\thanks{School of Industrial and Systems Engineering, Georgia Tech. {\tt dndadush@gmail.com}}
  \and
  Santosh Vempala\thanks{School of Computer Science, Georgia Tech. {\tt vempala@gatech.edu}}
}

\begin{document}

\maketitle

\begin{abstract}
We give a deterministic $O(\log n)^n$ algorithm for the {\em Shortest Vector Problem (SVP)} of a lattice under {\em any}
norm, improving on the previous best deterministic bound of $n^{O(n)}$ for general norms and nearly matching the bound
of $2^{O(n)}$ for the standard Euclidean norm established by Micciancio and Voulgaris (STOC 2010). Our algorithm can be
viewed as a derandomization of the AKS randomized sieve algorithm, which can be used to solve SVP for any norm in
$2^{O(n)}$ time with high probability. We use the technique of covering a convex body by ellipsoids, as introduced for
lattice problems in (Dadush et al., FOCS 2011).

Our main contribution is a deterministic approximation of an M-ellipsoid of any convex body.
We achieve this via a convex programming formulation of the optimal ellipsoid
with the objective function being an $n$-dimensional integral that we show can be approximated
deterministically, a technique that appears to be of independent interest.
  \ifnotes \begin{center}{\Huge{NOTES ARE ON}}\end{center} \fi
\end{abstract}

\thispagestyle{empty}

\newpage

\setcounter{page}{1}

\section{Introduction}

The Shortest Vector Problems (SVP) on lattices is a central algorithmic problems in the geometry of numbers, with
applications to Integer Programming~\cite{lenstra83:_integ_progr_with_fixed_number_of_variab}, factoring polynomials
over the rationals~\cite{lenstra82:_factor}, cryptanalysis
(e.g.,~\cite{odlyzko90:_rise_and_fall_of_knaps_crypt,DBLP:journals/joc/JouxS98,DBLP:conf/calc/NguyenS01}), and much
more.  (An $n$-dimensional \emph{lattice} $L$ is a discrete additive subgroup of $\R^{n}$, and is generated as the set
of integer linear combinations of some basis vectors $b_{1}, \ldots, b_{k} \in \R^{n}$, for some $k \leq n$.)  The SVP
is simply: given a lattice $L$ represented by a basis, find a nonzero $v \in L$ such that $\length{v}$ is minimized,
where $\length{\cdot}$ denotes a particular norm on $\R^{n}$.  

The fastest known algorithms for solving SVP in general norms, are $2^{O(n)}$ time algorithms based on the AKS Sieve
\cite{DBLP:conf/stoc/AjtaiKS01, DBLP:conf/fsttcs/ArvindJ08}. These algorithms use an exponential amount of randomness
and only guarantee the correctness of their outputs with high probability. Improving on this, \cite{DPV-SVP-11} gave a
$2^{O(n)}$ Las Vegas algorithm (i.e. only the runtime is random, not the correctness) for general norm SVP which uses
only a polynomial amount of randomness. In this paper, building on the ideas of \cite{DPV-SVP-11}, we give a deterministic
$O(\log n)^n$ algorithm for general norm SVP, hence completely eliminating the randomness while sustaining a moderate
slowdown in the running time. The previous best deterministic complexity for general norm SVP is $n^{\Omega(n)}$. 

We review the ideas behind \cite{DPV-SVP-11}. For the Euclidean norm (when $K$ is a ball in $\R^n$), 
Micciancio and Voulgaris \cite{DBLP:conf/stoc/MicciancioV10} showed how to solve the SVP in time $2^{O(n)}$, using a new enumeration technique
based on using the voronoi cell of a lattice (the set of points in $\R^n$ closer to the origin than any other lattice
point). Unfortunately, the direct generalization of their technique to other norms (i.e., using the associated voronoi
cell of the norm), even for $\ell_p$ norms, seems to break down. 

In \cite{DPV-SVP-11}, Dadush et al. proposed a different approach that uses the enumeration
technique \cite{DBLP:conf/stoc/MicciancioV10} and directly reduces SVP in general norms to enumeration in the $\ell_2$
norm. Their key idea was to use the classical $M$-ellipsoid covering from convex geometry to cover a given convex body
$K$ by a small number of ellipsoids each of roughly the same volume as $K$. An $M$-ellipsoid of a convex body $K$ is an
ellipsoid $E$ with the following properties:
\begin{enumerate}
\item $N(K,E) \le 2^{O(n)}$
\item $N(E,K) \le 2^{O(n)}$
\end{enumerate}
where $N(A,B) = \inf \set{|\Lambda|: \Lambda \subseteq \R^n, A \subseteq B + \Lambda}$ is the number of translations of
$B$ required to cover $A$. In words, the number of copies of $E$ required to cover $K$ and vice versa are both bounded
by a single exponential in $n$. The existence of such an ellipsoid for any convex body was established by Milman
\cite{M86}. We note that an M-ellipsoid can be quite different from the more classical John ellipsoid, e.g. the largest
ellipsoid contained in $K$, since its volume can be an $n^{O(n)}$ factor off from $K$ (e.g., the cube vs the unit ball)
implying than $N(K,E) = n^{\Omega(n)}$. 

The first step in \cite{DPV-SVP-11} is reduce to general norm SVP under $\|\cdot\|_K$ and a lattice $L$ to lattice point
enumeration inside a scaling of $K$, in particular any scaling $s > 0$ such that $sK \cap L \neq \emptyset$ and
$\frac{s}{2}K \cap L = \emptyset$ (which can easily be guessed). Importantly, at this scaling, it is shown that $sK$
never contains more than $2^{O(n)}$ lattice points in any translation. The main idea in \cite{DPV-SVP-11} is then that
enumerating the lattice points inside $sK$ reduces to enumerating the lattice points inside the ellipsoids in an
M-ellipsoid covering of $sK$, thereby reducing the problem to enumeration in $\ell_2$ (which can be solved using the
techniques in \cite{DBLP:conf/stoc/MicciancioV10}). Given the covering properties satisfied by the M-ellipsoid, we 
get that the total number of lattice points enumerated in this way is at most a $2^{O(n)}$ factor more than the maximum
number of lattice points $K$ can contain in any translation, and hence $2^{O(n)}$. 

Thus a key ingredient in the approach of \cite{DPV-SVP-11} to solve SVP under $\|\cdot\|_K$ is finding an
M-ellipsoid of $K$. Indeed, the paper \cite{DPV-SVP-11} gives a polynomial-time {\em randomized}
algorithm to construct an M-ellipsoid with high probability, based on the techniques of Klartag
\cite{K06} (such an algorithm was implicit in his paper). Unfortunately, the algorithm makes essential 
use of random sampling over convex bodies and seems inherently difficult to derandomize.

In this paper, we give a deterministic algorithm to build an ``approximate'' M-ellipsoid $E$ for any convex body $K$.
While we do not obtain the optimal covering bounds, we will guarantee that  $N(K,E) = 2^{O(n)}$ 
and $N(E,K) = O(\log n)^n = 2^{O(n\log\log n)}$. Moreover, we show that this ellipsoid $E$ can be computed $O(\sqrt{\log n})^n$ time. This
result and its consequence for the SVP are stated more precisely in the following theorems.

\begin{theorem} \label{thm:M-ellipsoid-algorithm}
There is deterministic $O(\log n)^n$-time algorithm that given any convex body $K \subset \R^n$, specified by
a membership oracle, finds an ellipsoid $E$ such that $N(K,E) \le 2^{O(n)}$ and $N(E,K) \le O(\log n)^n$. The complexity
of the algorithm (oracle calls and arithmetic operations) is $O(\sqrt{\log n})^n$. 
\end{theorem}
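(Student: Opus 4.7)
The plan is to realize Klartag's probabilistic M-ellipsoid construction as the optimum of a convex program whose objective is an $n$-dimensional integral over $K$, and then show that this integral can be approximated deterministically with only $O(\sqrt{\log n})^n$ evaluations. Following the scheme implicit in \cite{K06}, an M-ellipsoid of $K$ can be characterized (up to normalization) as the minimizer of a functional of the form
\[
F(A) \;=\; \int_K \Phi(x, A) \, dx
\]
over positive-definite matrices $A$, where $\Phi(\cdot, A)$ is a log-concave density (for instance a Gaussian parameterized by $A$) and $F$ is convex in a suitable reparameterization of $A$. Any ellipsoid $E_A$ obtained from an approximate optimizer of $F$ should inherit, by the usual volume/concentration argument, the covering bound $N(K, E_A) \le 2^{O(n)}$; the reverse covering bound $N(E_A, K)$ will depend on the approximation quality of our integration and optimization.

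The heart of the proof is to deterministically approximate $F(A)$ at a given $A$. Two structural properties of log-concave integrands are critical: (i) most of the mass is concentrated in a bounded Euclidean region whose size can be estimated from outer bounds on $K$, and (ii) within any cube of appropriately small side length the integrand varies by at most a constant factor. I would therefore cover a truncation region by a Cartesian grid with roughly $O(\sqrt{\log n})$ nodes per coordinate, test membership in $K$ at each node using the oracle, and form a weighted Riemann sum. Standard tail and Lipschitz estimates for log-concave densities would bound both the truncation error and the quadrature error, so the resulting sum is within a multiplicative factor of $O(\log n)^n = 2^{O(n \log \log n)}$ of the true integral.

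I would then feed this approximate oracle for $F$ into a convex optimization procedure (the ellipsoid method on the log-convex reparameterization of $F$ suffices), which in polynomially many queries produces an $A$ whose value is within the above approximation factor of the true optimum. Substituting this $A$ into Klartag's covering argument yields $N(K, E_A) \le 2^{O(n)}$ directly from approximate optimality, while the accumulated multiplicative slack from the integration step is absorbed into the reverse bound $N(E_A, K) \le O(\log n)^n$. The overall complexity is dominated by the $O(\sqrt{\log n})^n$ grid evaluations per objective query, giving the stated oracle/arithmetic bound.

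The main obstacle is the deterministic integration step: Monte Carlo makes multiplicative approximation of log-concave integrals straightforward, but reaching only $O(\sqrt{\log n})^n$ grid nodes requires a careful balance between truncation radius and mesh spacing, using sharp concentration of log-concave densities. The error analysis must additionally track how approximation losses propagate through the convex program and Klartag's covering lemma, so that $N(E, K)$ grows only to $O(\log n)^n$ rather than the naive $n^{\Omega(n)}$ that a cruder error accounting would produce.
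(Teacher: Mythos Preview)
Your proposal has the right high-level shape (convex program, deterministic quadrature, ellipsoid method) but misidentifies both the underlying functional and the source of the $O(\log n)^n$ loss, and the resulting plan would not go through.

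First, the paper does \emph{not} derandomize Klartag's construction; it explicitly remarks that Klartag's method ``makes essential use of random sampling over convex bodies and seems inherently difficult to derandomize.'' Instead it uses Pisier's $\ell$-position: the program minimizes
\[
f(A) \;=\; \int_{\R^n} \|Ax\|_K \,\gamma_n(x)\,dx
\]
over $A \succeq 0$ with $\det A \ge 1$. Note the integral is over $\R^n$ against the fixed Gaussian, not over $K$; the body enters only through the integrand $\|\cdot\|_K$, which is a norm and hence globally Lipschitz. This is what makes a coarse grid work. Your functional $F(A)=\int_K \Phi(x,A)\,dx$ has the unknown body in the \emph{domain} of integration, so a Riemann sum on $O(\sqrt{\log n})^n$ nodes is essentially trying to approximate $\vol(K)$ deterministically with that many oracle calls, which runs into the very lower bound the paper cites.

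Second, and more fatally, you state that your quadrature gives the objective only up to a multiplicative factor $O(\log n)^n$, and then propose to feed this into the ellipsoid method. That cannot work: with a $2^{\Theta(n\log\log n)}$ multiplicative uncertainty in each function evaluation you cannot compare two candidate matrices, so no black-box convex optimization can extract an approximate minimizer. The paper's entire technical core (Theorem~4.4 together with Lemma~4.3) is devoted to showing the discrete sum $\tilde f(A)$ satisfies
\[
\bigl(1-\tfrac{1}{s}\bigr) f(A) \;\le\; \tilde f(A) \;\le\; \bigl(1+\tfrac{1}{s}\bigr) f(A),
\]
i.e.\ a $(1+o(1))$ multiplicative approximation, using only $O(\sqrt{\log n})^n$ grid points. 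The trick is that the rounding error is controlled by $\E[\|U\|_K]$ for $U$ uniform on a tiny cube, and Lemma~4.3 bounds this by $\E[\|X\|_K]=\ell(K)$ itself, so the \emph{relative} error is small even though the grid is coarse. The $O(\log n)^n$ in the final covering bound does not come from integration slack at all; it is inherited from Pisier's inequality $\inf_T \ell(TK)\ell(T^*K^*) \le cn\log n$, which is what bounds $\vol(K)/\vol(E)$ for the optimal ellipsoid.
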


Using this theorem, and the techniques from \cite{DPV-SVP-11}, we obtain the following result:

\begin{theorem}\label{thm:det-SVP-anynorm}
Given a lattice $L$ by a basis and a norm $\|.\|_K$ specified by a convex body $K$, the shortest vector in $L$ under the
norm $\|.\|_K$ can be found in time $O(\log n)^n$. 
\end{theorem}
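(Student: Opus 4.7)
The plan is to follow the reduction framework of \cite{DPV-SVP-11}, which reduces general-norm SVP to $\ell_2$ lattice point enumeration via an M-ellipsoid covering, substituting the deterministic approximate M-ellipsoid from Theorem~\ref{thm:M-ellipsoid-algorithm} in place of their randomized construction and carefully tracking how the weaker covering bound $N(E,K) \le O(\log n)^n$ propagates through the analysis.

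First I would reduce the problem of finding a shortest lattice vector under $\|\cdot\|_K$ to lattice point enumeration inside appropriate scalings of $K$. As in \cite{DPV-SVP-11}, it suffices to guess a scaling $s > 0$ such that $sK \cap L \neq \emptyset$ and $\tfrac{s}{2} K \cap L = \emptyset$; there are only polynomially many candidate scalings to try (e.g., between the length of the shortest vector in an LLL-reduced basis and $1/n$ times that length), and at this critical scaling the key geometric property from \cite{DPV-SVP-11} guarantees that any translate of $sK$ contains at most $2^{O(n)}$ lattice points of $L$.

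Next I would invoke Theorem~\ref{thm:M-ellipsoid-algorithm} to deterministically compute, in $O(\sqrt{\log n})^n$ time, an approximate M-ellipsoid $E$ for $sK$ satisfying $N(sK, E) \leq 2^{O(n)}$ and $N(E, sK) \leq O(\log n)^n$. From the second bound, each translate of $E$ is covered by $O(\log n)^n$ translates of $sK$, so contains at most $O(\log n)^n \cdot 2^{O(n)} = O(\log n)^n$ lattice points of $L$. Using the first bound, I would cover $sK$ by $2^{O(n)}$ translates of $E$ and, on each one, enumerate the lattice points using the deterministic Voronoi-cell-based $\ell_2$ enumeration algorithm of Micciancio--Voulgaris~\cite{DBLP:conf/stoc/MicciancioV10}, which runs in $2^{O(n)}$ time plus time roughly linear in the number of output points. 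This yields $O(\log n)^n$ work per ellipsoid and $2^{O(n)} \cdot O(\log n)^n = O(\log n)^n$ work in total. The shortest vector of $L$ under $\|\cdot\|_K$ is then the nonzero element of minimum $\|\cdot\|_K$-norm in the union of these enumerations.

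The main subtlety is to verify that the weakened M-ellipsoid covering bound does not introduce any blowup beyond $O(\log n)^n$ elsewhere in the reduction, and that the $\ell_2$ enumeration within each ellipsoid proceeds in an output-sensitive manner so that one pays $2^{O(n)}$ only once per ellipsoid rather than once per enumerated point. Both issues should be settled by direct inspection of the arguments in \cite{DPV-SVP-11} and \cite{DBLP:conf/stoc/MicciancioV10}, but the substitution of $2^{O(n)}$ with $O(\log n)^n$ at each intermediate accounting step must be tracked carefully to confirm the final deterministic runtime of $O(\log n)^n = 2^{O(n \log \log n)}$.
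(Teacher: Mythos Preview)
Your proposal is correct and follows essentially the same route as the paper: reduce SVP to lattice point enumeration in $sK$ at the critical scaling (where $G(sK,L)=2^{O(n)}$), invoke Theorem~\ref{thm:M-ellipsoid-algorithm} to obtain the approximate $M$-ellipsoid, and plug the resulting covering bounds $N(sK,E)\le 2^{O(n)}$, $N(E,sK)\le O(\log n)^n$ into the enumeration framework of \cite{DPV-SVP-11} (whose core step is the Micciancio--Voulgaris $\ell_2$ enumerator). The paper packages this via the black-box statement that $K\cap L$ can be enumerated in time $G(K,L)\cdot N(K,E)\cdot N(E,K)\cdot 2^{O(n)}$, whereas you unroll that black box explicitly, but the argument is the same.
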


Applications to other lattice problems (closest vector, integer programming) are described in Section \ref{sec:SVP}.
These results are based on two main ideas. The first is a convex program inspired by an existential approximation to the
$M$-ellipsoid based on a position called the $\ell$-position, given by Pisier \cite{Pis89}. The second is an algorithm
for solving the convex program, where the key hurdle is an efficient deterministic approximation of the objective value at any given
feasible point. 

In the next section, we describe the $\ell$-position which leads to the approximate $M$-ellipsoid. Then we give our
convex programming based algorithm for computing the approximate $M$-ellipsoid, followed by its analysis. Section
\ref{sec:SVP} applies this to the SVP and other problems. 

We conclude this section with a comment on the complexity of computing (approximate) $M$-ellipsoids (and therefore the $\ell$-position).
An $M$-ellipsoid $E$ for a convex body $K$ achieving covering numbers $N(K,E)$, $N(E,K)$ gives an $N(K,E)N(E,K)$ to the volume of 
$K$. It is well-known that in the oracle model for convex bodies, any deterministic algorithm that has complexity at most $n^a$ incurs an approximation 
factor of $(cn/a\log n)^{n/2}$, implying in particular that an algorithm that achieves a $2^{O(n)}$ approximation must have complexity
$2^{\Omega(n)}$. Theorem \ref{thm:M-ellipsoid-algorithm} readily implies an $O(\log n)^n$ approximation with
 $O(\sqrt{\log n})^n$ complexity, getting close to the lower bound. Fully closing this gap is an interesting open problem.

\section{$M$-ellipsoids and the $\ell$-position}

As explained above, one useful view of whether an ellipsoid $E$ ``approximates'' a convex body $K$ well is if
$N(K,E),N(E,K) = 2^{O(n)}$. A similar view, taken by Pisier, is to find an ellipsoid $E$ with the property that 
$\vol(K \cap E) \ge \vol(E)/2$ and $\vol(K)$ not much larger than $\vol(E)$. 

This is useful in light of the following elementary bound on covering numbers for centrally symmetric bodies (see
\cite{MP00}).
\begin{lemma}
\label{lem:cov-est}
Let $A,B \subseteq \R^n$ be symmetric convex bodies. Then
\[
N(A,B) \leq 3^n \frac{\vol(A)}{\vol(A \cap B)}
\]
\end{lemma}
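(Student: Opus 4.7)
The plan is a classical volume-packing argument, exploiting central symmetry of both $A$ and $B$ (and hence of $A \cap B$).

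First I would take a maximal packing: choose points $x_1, \ldots, x_N \in A$ such that the translates $x_i + \tfrac{1}{2}(A \cap B)$ are pairwise disjoint. Such a maximal set exists because $A$ is bounded and $A \cap B$ has nonempty interior (otherwise the bound is trivially $\infty$).

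Next I would establish a covering property from maximality. For any $y \in A$, the translate $y + \tfrac{1}{2}(A \cap B)$ must intersect some $x_i + \tfrac{1}{2}(A \cap B)$; writing $y - x_i$ as a difference of two elements of $\tfrac{1}{2}(A\cap B)$ and using symmetry, one gets $y - x_i \in A \cap B \subseteq B$, hence $A \subseteq \bigcup_{i=1}^{N} (x_i + B)$, so $N(A,B) \le N$.

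Finally I would upper-bound $N$ by volumes. Since each $x_i \in A$ and $\tfrac{1}{2}(A \cap B) \subseteq \tfrac{1}{2} A$, convexity and symmetry of $A$ give $x_i + \tfrac{1}{2}(A \cap B) \subseteq A + \tfrac{1}{2} A = \tfrac{3}{2} A$. Disjointness of the translates then yields
\[
N \cdot \left(\tfrac{1}{2}\right)^n \vol(A \cap B) \;\le\; \vol\!\left(\tfrac{3}{2} A\right) \;=\; \left(\tfrac{3}{2}\right)^n \vol(A),
\]
so $N(A,B) \le N \le 3^n \, \vol(A)/\vol(A \cap B)$, as claimed. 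There is no real obstacle here — the only points requiring care are invoking symmetry when writing the difference set $\tfrac{1}{2}(A\cap B) - \tfrac{1}{2}(A\cap B) = A \cap B$, and using convexity plus symmetry to see $A + \tfrac{1}{2}A = \tfrac{3}{2}A$.
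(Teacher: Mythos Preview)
Your argument is correct and is exactly the standard packing-to-covering volume argument. The paper itself does not supply a proof of this lemma --- it merely cites \cite{MP00} --- so there is nothing to compare against beyond noting that your writeup matches the classical proof the reference would give. One minor remark: the identity $A + \tfrac{1}{2}A = \tfrac{3}{2}A$ only needs convexity of $A$ (write $a_1 + \tfrac{1}{2}a_2 = \tfrac{3}{2}\bigl(\tfrac{2}{3}a_1 + \tfrac{1}{3}a_2\bigr)$), so symmetry is used only for the difference-set step $\tfrac{1}{2}(A\cap B) - \tfrac{1}{2}(A\cap B) = A\cap B$.
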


We are now ready for the $\ell$-position which lets us find an ellipsoid with small covering numbers using this perspective.

Let $K \subseteq \R^n$ be a symmetric convex body, and let $K^* = \set{x: \sup_{y \in K} \pr{x}{y} \leq 1}$ denote the
polar of $K$. Let $B_2^n \subseteq \R^n$ denote the unit euclidean ball, and $S^{n-1} = \partial B_2^n$ denote the unit
sphere.  Let $\gamma_n(x) = \left(\frac{1}{\sqrt{2 \pi}}\right)^n e^{-\frac{1}{2}\|x\|^2}$ be the density of the
canonical gaussian measure on $\R^n$. We define the expected norm of a random Gaussian point as
\[
\ell(K) = \int \|x\|_K \gamma_n(x) dx.
\]

The following lemma, see $\cite{Pis89}$, provides an asymptotic estimate of this quantity.
\begin{lemma}
\label{lem:l-pos-ess}
Let $K \subseteq \R^n$ be a symmetric convex body. Then for
\[
m = \sup \set{r \geq 0: \vol_{n-1}(rS^{n-1}  \cap K) \geq \frac{1}{2} \vol_{n-1}(rS^{n-1})}
\]
we have that $l(K) = \Theta\left(\frac{\sqrt{n}}{m}\right)$.
Furthermore, $\vol(mB_2^n \cap K) \geq \frac{1}{2}\vol(mB_2^n)$.
\end{lemma}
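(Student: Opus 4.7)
The plan is to pass to polar coordinates and reduce the question to understanding the distribution of $\|\theta\|_K$ for $\theta$ uniform on $S^{n-1}$. Writing a standard Gaussian $X \sim N(0, I_n)$ as $X = R\theta$ with $R = \|X\|_2$ chi-distributed and $\theta = X/\|X\|_2$ uniform on $S^{n-1}$, using independence of $R$ and $\theta$ together with the homogeneity of $\|\cdot\|_K$, gives
\[
\ell(K) = \E[R]\cdot \E_\theta[\|\theta\|_K] = \Theta(\sqrt{n})\cdot \mu_\theta,
\]
where $\mu_\theta := \E_\theta[\|\theta\|_K]$ and I used $\E[R] = \Theta(\sqrt{n})$. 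Note next that $f(r) := \vol_{n-1}(rS^{n-1} \cap K)/\vol_{n-1}(rS^{n-1}) = \Pr_\theta[\|\theta\|_K \leq 1/r]$ is non-increasing in $r$ (the condition $\|\theta\|_K \leq 1/r$ only becomes easier to satisfy as $r$ decreases), so $M := 1/m$ is precisely the median of $\|\theta\|_K$ on $S^{n-1}$. It therefore suffices to show $\mu_\theta = \Theta(M)$.

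The lower bound $\mu_\theta \geq M/2$ is immediate from $\Pr[\|\theta\|_K \geq M] \geq 1/2$. For the upper bound, I would combine Levy's concentration on the sphere with a lower bound on $M$ in terms of the Lipschitz constant $L := \sup_{\theta \in S^{n-1}} \|\theta\|_K$ of $\|\cdot\|_K$ with respect to $\|\cdot\|_2$. Levy's inequality yields $\Pr[\|\theta\|_K > M + t] \leq 2\exp(-nt^2/(2L^2))$, hence $\mu_\theta \leq M + O(L/\sqrt{n})$. To bound $L/\sqrt{n}$ by $M$, I would use that $\|\cdot\|_K$ is the support function of $K^*$, so some $v_0 \in K^*$ has $\|v_0\|_2 = L$. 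Then $\|\theta\|_K \geq |\pr{\theta}{v_0}|$, and since the median of $|\pr{\theta}{v_0/L}|$ for $\theta$ uniform on $S^{n-1}$ is $\Theta(1/\sqrt{n})$ (a standard fact about marginals of spherical measures, where the first coordinate is approximately $N(0,1/n)$-distributed), this forces $M \geq cL/\sqrt{n}$. Combining gives $\mu_\theta = O(M) = O(1/m)$.

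The ``furthermore'' claim then follows from the same monotonicity observation: since $f(r) \geq 1/2$ for every $r \leq m$, integrating in polar coordinates yields
\[
\vol(mB_2^n \cap K) = \int_0^m f(r)\, \vol_{n-1}(rS^{n-1})\, dr \geq \tfrac{1}{2}\int_0^m \vol_{n-1}(rS^{n-1})\,dr = \tfrac{1}{2}\vol(mB_2^n).
\]
The main obstacle will be the upper bound on $\mu_\theta$, specifically ruling out the degenerate possibility that $L$ is much larger than $\sqrt{n}\,M$; this is what the dual-body argument handles, by producing a single linear functional that witnesses $\|\theta\|_K$ being typically of order $L/\sqrt{n}$ just from one ``thin'' direction of $K^*$.
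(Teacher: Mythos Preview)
The paper does not include a proof of this lemma; it simply refers the reader to Pisier's book. Your argument is correct and is in fact the standard proof: factor a standard Gaussian as $R\theta$, identify $1/m$ with the median $M$ of $\|\theta\|_K$ for $\theta$ uniform on $S^{n-1}$, and then compare mean to median via L\'evy's concentration on the sphere. The only delicate point is the upper bound $\mu_\theta = O(M)$, where one must control the Lipschitz constant $L$ of $\|\cdot\|_K$ in terms of $M$; your dual argument---choosing $v_0 \in K^*$ of maximal Euclidean length so that $\|\theta\|_K \geq |\langle\theta,v_0\rangle|$ pointwise, and using that the median of $|\langle\theta,v_0/L\rangle|$ is $\Theta(1/\sqrt{n})$---is exactly the right device, and the pointwise domination does transfer to medians. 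For the ``furthermore'' part, your polar integration is fine; one may note that $f(r)=\Pr[\|\theta\|_K\le 1/r]$ is left-continuous (it is a CDF composed with $r\mapsto 1/r$), so $f(r)\ge 1/2$ indeed holds on all of $[0,m]$.
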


A theorem of Pisier \cite{Pis89} relates the $\ell$-estimate of a body with that of its dual. 
\begin{theorem}
\label{thm:pis-l-pos}
Let $K \subseteq \R^n$ be a symmetric convex body. Then
\[
\inf_{T \in SL(n)} l(TK) l(T^*K^*) \leq c n \log n
\]
where $SL(n)$ is the set of $n \times n$ matrices of determinant $1$ and $c > 0$ is an absolute constant.
\end{theorem}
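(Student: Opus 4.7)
The plan is to prove this via an extremal position argument combined with Pisier's K-convexity theorem. I would first observe that $\ell(K) = \E_g \sup_{y \in K^*} \pr{g}{y}$ is the Gaussian mean width of $K^*$, and symmetrically $\ell(K^*)$ is the Gaussian mean width of $K$. By the substitution $T \leftrightarrow T^{-*}$ in the infimum, it suffices to produce a single $T \in SL(n)$ with $\ell(TK)\,\ell((TK)^*) = O(n\log n)$.

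Next, I would define the $\ell$-position as a minimizer $T_*$ of $F(T) = \ell(TK)^2 + \ell((TK)^*)^2$ subject to $\det T = 1$. Existence follows by a standard coercivity argument on $SL(n)$: if the singular values of $T$ become very unbalanced, then at least one of $\ell(TK)$, $\ell((TK)^*)$ blows up, so $F$ attains its minimum on a sublevel set. Differentiating $F$ along directions in $\mathfrak{sl}(n)$ and using Gaussian integration by parts produces a first-order optimality condition: two natural symmetric matrices, built from Gaussian averages of the subgradients of $\|\cdot\|_{T_*K}$ and $\|\cdot\|_{(T_*K)^*}$, are proportional to the identity. Informally, this says that $T_*K$ and its polar are ``Euclidean-balanced.''

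The heart of the argument is to convert this balance into the bound $\ell(T_*K)\,\ell((T_*K)^*) \le c\,\kappa(X)\cdot n$, where $X = (\R^n,\|\cdot\|_{T_*K})$ and $\kappa(X)$ is the K-convexity constant of $X$. The idea is to express $\ell(K)$ and $\ell(K^*)$ as pairings of $\nabla\|\cdot\|_K$ and $\nabla\|\cdot\|_{K^*}$ against the Gaussian density via integration by parts, then factor the resulting bilinear form through the Rademacher/Gaussian projection onto the first Hermite chaos; that projection has operator norm $\kappa(X)$, and the $\ell$-position condition ensures that the higher chaos contributions cancel cleanly. The theorem then follows by invoking Pisier's K-convexity inequality, $\kappa(X) \le c\log n$ for any $n$-dimensional normed space $X$.

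The main obstacle is Pisier's K-convexity theorem itself, which is a deep functional-analytic result whose proof relies on the hypercontractivity of the Ornstein--Uhlenbeck semigroup and does not follow from elementary estimates; I would invoke it as a black box from \cite{Pis89}. A secondary technicality is that $\|\cdot\|_K$ is not smooth, so the differentiation steps must be interpreted via subgradients, or handled by approximating $K$ by smooth strongly convex bodies and passing to the limit using continuity of $\ell(\cdot)$ in the Hausdorff metric, together with the fact that the Gaussian measure assigns zero mass to the nonsmooth locus of the norm.
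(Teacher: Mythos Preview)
The paper does not prove this theorem at all: it is stated as a result of Pisier with a citation to \cite{Pis89} and used as a black box in the proof of Theorem~\ref{thm:l-pos-cov}. So there is no ``paper's own proof'' to compare against.

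That said, your outline is the standard route to this result as presented in Pisier's book, and it is essentially correct at the level of a sketch. The two main ingredients are exactly the ones you name: (i) the Figiel--Tomczak-Jaegermann argument that at a minimizer of the $\ell$-functional over $SL(n)$ one has $\ell(T_*K)\,\ell((T_*K)^*) \le C\,\kappa(X)\cdot n$, where $\kappa(X)$ is the $K$-convexity constant of $X=(\R^n,\|\cdot\|_{T_*K})$, obtained by factoring $\ell(K^*)$ through the Gaussian/Rademacher projection onto the first chaos; and (ii) Pisier's $K$-convexity theorem $\kappa(X) \le C\log(1+d(X,\ell_2^n)) \le C\log n$. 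One small clarification: the first-order optimality condition at the $\ell$-position is not strictly needed to get the product bound involving $\kappa(X)$; what is really used is the trace-duality identity $\ell(u)\,\ell^*(u) \ge n$ for the identity map $u$, together with $\ell(u^*) \le \kappa(X)\,\ell^*(u)$, and the $\ell$-position simply makes these inequalities tight up to constants. Your handling of the nonsmoothness of $\|\cdot\|_K$ by approximation is the right way to make the differentiation steps rigorous.
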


The next theorem, known as the Blashke-Santal{\'o} inequality \cite{Bla18, San49}, gives an upper bound on the volume
product, a fundamental quantity in convex geometry.
\begin{theorem}[Blashke-Santal{\'o}]
\label{thm:bl-san}
Let $K \subseteq \R^n$ be a symmetric convex body. Then
\[
\vol(K) \vol(K^*) \leq \vol(B_2^n)^2
\]
with equality iff $K$ is an ellipsoid.
\end{theorem}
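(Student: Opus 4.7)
The plan is to prove the inequality by iterated Steiner symmetrization with respect to hyperplanes through the origin, reducing $K$ to a Euclidean ball while monitoring the polar volume, and to recover the equality case from the equality case of a single symmetrization step.

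The core lemma is: for any hyperplane $H \ni 0$ and any centrally symmetric convex body $K$, the Steiner symmetrization $S_H K$ satisfies $\vol(S_H K) = \vol(K)$ and $\vol((S_H K)^*) \geq \vol(K^*)$. Volume preservation is built into the definition of Steiner symmetrization. For the polar monotonicity I would parametrize both bodies by lines parallel to the normal $u$ of $H$: writing
\[
K = \{(y,t) \in H \times \R : \phi_-(y) \leq t \leq \phi_+(y)\},
\]
with $\phi_-(-y) = -\phi_+(y)$ by central symmetry, $S_H K$ has the centered profile $|t| \leq \tfrac{1}{2}(\phi_+(y)-\phi_-(y))$. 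Describing $K^*$ and $(S_H K)^*$ analogously in terms of their support functions in $H$-directions, the central symmetry of $K$ (equivalently $K^* = -K^*$) lets one merge the two ``one-sided'' polar slices of $K^*$ over each $z \in H$ into a single centered slice contained in $(S_H K)^*$ whose length is at least their average. Integrating this slicewise inclusion over $z \in H$ yields the required volume inequality.

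Next I would iterate. By a classical theorem on Steiner symmetrizations, there is a sequence of directions $u_1, u_2, \dots$ such that $K_j = S_{u_j} \cdots S_{u_1} K$ converges in Hausdorff distance to a Euclidean ball $rB_2^n$ with $r^n \vol(B_2^n) = \vol(K)$. After a preliminary normalization placing $K$ in a fixed annulus (so that the $K_j$ all do as well), polar duality is Hausdorff-continuous along the sequence, giving $K_j^* \to r^{-1}B_2^n$. Combining with the core lemma applied at each step,
\[
\vol(K)\vol(K^*) \leq \lim_j \vol(K_j)\vol(K_j^*) = \vol(rB_2^n)\vol(r^{-1}B_2^n) = \vol(B_2^n)^2.
\]
For the equality statement, equality forces equality in every symmetrization step, which in turn forces $K$ itself to be already reflection-symmetric about each hyperplane $H$ used in the sequence. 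Since the chosen directions are rich enough to carry $K$ into a ball, $K$ must be an ellipsoid.

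The main obstacle is the polar-monotonicity lemma: the polar operation is not compatible with slicing, so a pointwise comparison of slices of $K^*$ and $(S_H K)^*$ is delicate. The cleanest route uses the central symmetry $K^* = -K^*$ to turn a two-point inequality relating the upper and lower polar boundaries of $K$ into a single centered inequality for $(S_H K)^*$, reducing that step to a one-dimensional convexity check. Everything else---volume continuity, uniform polar continuity on a bounded family of symmetric bodies, and the existence of a symmetrization sequence converging to the ball---is standard technical machinery.
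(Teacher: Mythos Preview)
The paper does not prove this theorem; it simply states it with attribution to Blaschke and Santal{\'o} and then uses it as a black box in the proof of Theorem~\ref{thm:l-pos-cov}. So there is no ``paper's own proof'' to compare against.

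Your outline via iterated Steiner symmetrization is the classical route (essentially the Meyer--Pajor argument for the symmetric case). The overall strategy is sound: volume preservation of $K$, monotonicity of $\vol(K^*)$ under symmetrization, convergence of iterated symmetrals to a ball, and continuity of the polar operation on a uniformly bounded family. The key step is, as you identify, the polar-monotonicity lemma, and your parametrization of $K$ and $K^*$ by chords orthogonal to $H$ is the right set-up. One point to tighten: your description of ``merging the two one-sided polar slices'' is the heart of the matter but is left vague; the clean way is to show that for each fixed $z\in H$ the chord of $(S_HK)^*$ over $z$ has length at least the average of the chord lengths of $K^*$ over $z$ and over $-z$, which follows from an explicit computation with the support function of $S_HK$ and the arithmetic--harmonic mean inequality (or a direct convexity argument). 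A second point: your equality argument needs care because you pass to a limit. Equality in the product forces equality at every finite stage only if the monotone sequence $\vol(K_j^*)$ is eventually constant, which it is since it is nondecreasing with limit $\vol(K^*)$; but you should say this explicitly before invoking the single-step equality case.
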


Using the above estimates, we get the following well-known result, whose proof we include for completeness.
\begin{theorem}[Pisier]
\label{thm:l-pos-cov}
Let $K \subseteq \R^n$ be a symmetric convex body. Then there exists
an ellipsoid $E \subseteq \R^n$ such that
\[
\vol(E \cap K) \geq \frac{1}{2} \vol(E) \quad \text{ and } \quad \vol(K) \leq O(\log n)^n \vol(E \cap K)
\]
In addition, we get that
\[
N(K,E) = O(\log n)^n \quad \text{ and } \quad N(E,K) = \frac{1}{2} 3^n
\]
\end{theorem}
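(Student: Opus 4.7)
The plan is to combine the four preceding ingredients in a direct way. First, invoke Pisier's theorem (Theorem~\ref{thm:pis-l-pos}) to obtain $T \in SL(n)$ with $\ell(TK)\,\ell(T^*K^*) \le cn\log n$, then apply Lemma~\ref{lem:l-pos-ess} separately to $TK$ and to $T^*K^*$ to extract radii $m_1, m_2 > 0$ with $\ell(TK) = \Theta(\sqrt{n}/m_1)$, $\ell(T^*K^*) = \Theta(\sqrt{n}/m_2)$, together with the half-volume conclusions
\[
\vol(m_1 B_2^n \cap TK) \ge \tfrac{1}{2}\vol(m_1 B_2^n), \qquad \vol(m_2 B_2^n \cap T^*K^*) \ge \tfrac{1}{2}\vol(m_2 B_2^n).
\]
The Pisier bound then forces $m_1 m_2 \ge \Omega(1/\log n)$.

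Next, define the ellipsoid $E := m_1 T^{-1} B_2^n$; since $\det T = 1$, we have $\vol(E) = m_1^n \vol(B_2^n)$. Because $T$ is volume-preserving, applying it to $E \cap K$ gives $\vol(E \cap K) = \vol(m_1 B_2^n \cap TK) \ge \tfrac{1}{2}\vol(E)$, which is the first half of the claim. For the second half, applying the volume-preserving map $(T^*)^{-1}$ to the dual half-volume estimate yields $\vol(K^*) \ge \tfrac{1}{2}\,m_2^n\,\vol(B_2^n)$, and the Blaschke-Santal\'o inequality (Theorem~\ref{thm:bl-san}) then gives $\vol(K) \le \vol(B_2^n)^2/\vol(K^*) \le 2\vol(B_2^n)/m_2^n$. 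Combining these two estimates,
\[
\frac{\vol(K)}{\vol(E \cap K)} \le \frac{4}{(m_1 m_2)^n} = O(\log n)^n,
\]
which is the second volume bound in the theorem.

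Finally, the covering-number statements follow by plugging these volume ratios into Lemma~\ref{lem:cov-est}: $N(K,E) \le 3^n \cdot \vol(K)/\vol(K \cap E) = O(\log n)^n$ and $N(E,K) \le 3^n \cdot \vol(E)/\vol(E \cap K) \le 2 \cdot 3^n$. There is no genuinely hard step here, as the argument is essentially bookkeeping. The only place requiring care is matching up the linear transforms: using $T$ to identify $E \cap K$ with $m_1 B_2^n \cap TK$ and $(T^*)^{-1}$ to transport the dual estimate into one about $K^*$. This is exactly the pairing Pisier's theorem supplies so that the Blaschke-Santal\'o volume product closes the loop and converts the joint $\ell$-bound into the desired covering estimate.
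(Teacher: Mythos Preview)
Your proof is correct and follows essentially the same route as the paper: both apply Theorem~\ref{thm:pis-l-pos} to get $\ell(TK)\ell(T^*K^*) = O(n\log n)$, use Lemma~\ref{lem:l-pos-ess} on $TK$ and $T^*K^*$ to extract the two radii with $m_1 m_2 = \Omega(1/\log n)$, and close with Blaschke--Santal\'o and Lemma~\ref{lem:cov-est}. The only cosmetic difference is that the paper first replaces $K$ by $TK$ and then works with the Euclidean ball $E = mB_2^n$, whereas you carry $T$ explicitly and set $E = m_1 T^{-1}B_2^n$; your bound $N(E,K) \le 2\cdot 3^n$ is indeed what the argument gives (the $\tfrac12\cdot 3^n$ in the statement appears to be a typo).
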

\begin{proof}
Let us first apply a measure preserving linear transformation $T$ to $K$ such that $l(TK)l(T^*K^*)$ is minimized, and
hence by~\ref{thm:pis-l-pos} we may assume that $l(K)l(K^*) = O(n \log n)$. Now using Lemma \ref{lem:l-pos-ess} we see
that
\[
m = \sup \set{r \geq 0: \vol(rB_2^n \cap K) \geq \frac{1}{2} \vol(rB_2^n)} = \Omega\left(\frac{\sqrt{n}}{l(K)}\right)
\]
and that
\[
m^* = \sup \set{r \geq 0: \vol(rB_2^n \cap K^*) \geq \frac{1}{2} \vol(rB_2^n)} = \Omega\left(\frac{\sqrt{n}}{{l(K^*)}}\right)
\]
Hence we get that
\[
m m^* = \Omega\left(\frac{1}{\log n}\right)
\]
Using Theorem~\ref{thm:bl-san} we get that
\begin{align*}
\vol(K) &\leq \frac{\vol(B_2^n)^2}{\vol(K^*)} \leq 2 \frac{\vol(B_2^n)^2}{\vol(m^* B_2^n)}
        = \left(\frac{1}{m^*}\right)^n \vol(B_2^n) \\
        &= O(m \log n)^n \vol(B_2^n) = O(\log n)^n \vol(m B_2^n) = O(\log n)^n \vol(m B_2^n \cap K)
\end{align*}
We now see that the ellipsoid $E = mB_2^n$ satisfies the claims of the corollary. To derive the
additional assertions, we simply apply Lemma \ref{lem:cov-est} to the volume estimates above.
\end{proof}

\section{Algorithm to compute an $\ell$-type Ellipsoid}

Our algorithm will find an ellipsoid by (approximately) solving the following convex program (CP).

\begin{align}
\label{ell-prog}
\begin{split}
\inf f(A) &= \int_{\R^n} \|Ax\|_{K} \gamma_n(x)dx \\
\text{subject to} & \\
	           \quad A &\succeq 0 \\
                   \quad \det(A) & \geq 1
\end{split}
\end{align}

The above program models a tractable formulation of the implicit optimization problem in Theorem \ref{thm:pis-l-pos}.
Indeed it is not hard to show that the $\ell$-ellipsoid (understood by its associated linear transformation) alluded to
in Theorem \ref{thm:pis-l-pos}, in fact gives a feasible solution to the above program of good quality. Hence the optimal
solution to the above program, will be at least as good as the $\ell$-ellipsoid for our purposes. Hence to
yield our approximate M-ellipsoid, it suffices to solve the above program.

In the above program, $K$ will be a symmetric convex body presented by a weak membership oracle, satisfying
$rB_2^n \subseteq K \subseteq RB_2^n$. To solve the program, we first round $K$ using the ellipsoid method \cite{GLS} so
that $B_2^n \subseteq K \subseteq n B_2^n$ (note the improvement from $n^{\frac{3}{2}}$ to $n$ is possible since $K$ is
centrally symmetric). Next we use a discrete approximation of space to approximate the $\ell$-estimate at any given
$A$, where this approximation remains convex. Next we analyze the properties of the above convex program, showing that (1) a
well sandwiched subset of the feasible region (ratio of inner contained and outer containing ball) contains the optimal
solution, (2) the objective function is Lipshitz, and (3) the objective value of the optimal solution is not too small.
From here, we apply the classical reduction from weak membership to weak optimization \cite{GLS} (which simulates the
ellipsoid method), which allows us to compute a $(1+\eps)$ approximation (multiplicative) of the optimal solution using
at most a polynomial number of queries to the objective function. 

Our approximation of the $\ell$-estimate is as follows:
Let 
\[
s = \frac{1}{\sqrt{2\pi}} \sqrt{\frac{\log(2(2n+1))}{\pi}}, \quad  
C_s = \frac{1}{2s}[-1,1]^n \quad \mbox{ and } 
p_x = \int_{C_s} \gamma_n(x + y)dy.
\]
Define $D \subseteq \R^n$ be set of points from the lattice $(1/s)\Z^n$ that lie 
in the ball of radius $3\sqrt{n}$ around the origin, i.e.,
\[
D = \left(\frac{1}{s} \Z^n\right) \bigcap \left(3\sqrt{n}B_2^n\right)
\]
Then 
\[
\tilde{f}(A)=\sum_{x \in D} p_x \|Ax\|_K.
\]

We conclude the description of the algorithm by bounding the size of $D$ and observing that it can be efficiently enumerated. 

First we note that $|D| = O(\sqrt{\log n})^n$. Since $C_s$ tiles space with respect to $\frac{1}{s}
\Z^n$ and $C_s \subseteq \sqrt{n}B_2^n$, we have that
\[
|D| = \frac{\vol(D + C_s)}{\vol(C_s)} \leq \frac{\vol(3\sqrt{n}B_2^n + C_s)}{\vol(C_s)}
    \leq \frac{\vol(4\sqrt{n}B_2^n)}{s^{-n}} = 4^n \vol(\sqrt{n}B_2^n) s^n = O(\sqrt{\log n})^n
\]
as claimed. 

It is straightforward to compute the set $D$ using $O(\sqrt{\log n})^n$ time and space. To see
this, we observe that the graph induced on $D$ by connecting elements $x,y \in D$ iff $x-y \in \pm \frac{1}{s}
\set{e_1,\dots,e_n}$ is connected: a path to the origin can be constructed from any $v \in D$ by decreasing each
component of $v$ by $\frac{1}{s}$ until it hits zero. Hence a breadth-first or depth-first search 
of this graph starting from the origin allows us to compute all of $D$ in the required time.

\section{Analysis}

The analysis is divided into two parts. First, we give an $O(\sqrt{\log n})^n$ algorithm to compute an approximation of
the objective value in \ref{ell-prog} on any given input. Second, we show that the optimization problem with the
approximated objective \ref{ell-prog} is well-behaved, i.e. that it is convex, that the feasible region can be nicely
bounded, the objective function is Lipshitz. This will allow us to apply the ellipsoid algorithm to solve the problem.

\subsection{Computing the $\ell$-estimate}

In this section, we analyze the deterministic algorithm to approximately compute $\ell(K)$ in $O(\sqrt{\log n})^n$ time.
Recall that our approach is to approximate the associated integral as a sum over a discrete set.

We first describe the idea. A reasonable first approach would be to check whether the integrand (i.e. $\|x\|_K$) is
Lipschitz enough so that reasonably sized discretization may be used to approximate the integral
$\ell(K)$. Indeed, it will be true that $|\|x\|_K-\|y\|_K| \leq O(\ell(K))\|x-y\|_2$. Given that
the mass of the $n$ dimensional standard gaussian is concentrated inside of shell of constant width at radius
$\sqrt{n}$, this bound on the Lipshitz constant would suggest that a discretization $D$ of $\sqrt{n}S^{n-1}$, such
that every point in $\sqrt{n}S^{n-1}$ is at distance $O(1)$ from $D$, should suffice to estimate $\ell(K)$. 
Though this will indeed be true, any such discrete set $D$ must have size $O(\sqrt{n})^n$, i.e. far lager than $O(\sqrt{\log
n})^n$. Taking a closer look however, we observe that one only needs such a Lipschitz bound ``on average", since all we
want is to approximate is the integral. This we are able to bound below, using some standard tail bounds and a
simple monotonicity inequality about expectations. 

To perform the analysis of our algorithm, 
we will need certain facts about the discrete Gaussian distribution.  Let 
\[
\rho_s(x) = e^{-\pi\|\frac{x}{s}\|^2}
\]
for $x \in \R^n$, and we write $\rho_s(A)$ to mean $\sum_{x \in A} \rho_s(x)$
for $A \subseteq \R^n$. 
For an $n$-dimensional lattice $L \subseteq \R^n$, and $c \in \R^n$ we define the discrete Gaussian
measure on $L+c$ with parameter $s$ as
\[
D_{L+c,s}(A) = \frac{\rho_s(A)}{\rho_s(L+c)}
\]
for $A \subseteq L+c$.

In our setting, we will only need the case $L = \Z^n$. 
We let $U$ stand for the uniform distribution on $[-1/2,1/2]^n$.
We now state some useful standard lemmas. See \cite{DBLP:journals/dcg/Banaszczyk95, DBLP:journals/siamcomp/MicciancioR07}.


\begin{lemma}
\label{lem:zn-ineqs}
Take $s \geq \sqrt{\frac{\log(2(t+1))}{\pi}}$ and let $X$ be distributed as $D_{L+c,s}$ for
$c \in \R^n$. Then
\[
\left(1-\frac{1}{t}\right)^n s^n \leq \rho_s(\Z^n + c) \leq \left(1+\frac{1}{t}\right)^n s^n
\]
\end{lemma}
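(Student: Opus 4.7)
The plan is to exploit the tensor-product structure of $\rho_s$ to reduce the statement to a one-dimensional estimate, and then to prove that estimate via Poisson summation and a geometric-series bound. Since $\rho_s(x) = \prod_{i=1}^n e^{-\pi x_i^2/s^2}$, we have the factorization
\[
\rho_s(\Z^n + c) \;=\; \prod_{i=1}^n \sum_{k \in \Z} e^{-\pi(k+c_i)^2/s^2},
\]
so it suffices to establish the one-dimensional inequality
\[
\left(1 - \frac{1}{t}\right) s \;\le\; \sum_{k \in \Z} e^{-\pi(k+c')^2/s^2} \;\le\; \left(1 + \frac{1}{t}\right) s
\]
for every $c' \in \R$; the product of $n$ copies of this inequality then yields the lemma.

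For the one-dimensional bound I would apply Poisson summation to $f(x) = e^{-\pi x^2/s^2}$. Its Fourier transform is $\hat{f}(\xi) = s\, e^{-\pi s^2 \xi^2}$, so Poisson gives
\[
\sum_{k \in \Z} e^{-\pi(k+c')^2/s^2} \;=\; s \sum_{m \in \Z} e^{-\pi s^2 m^2} e^{2\pi i m c'} \;=\; s \;+\; s \sum_{m \neq 0} e^{-\pi s^2 m^2}\cos(2\pi m c'),
\]
where the last equality pairs $m$ with $-m$ and uses that the left-hand side is real. Setting $q = e^{-\pi s^2}$, the error term is bounded in absolute value by $2s \sum_{m \geq 1} q^{m^2} \leq 2s \sum_{m \geq 1} q^m = \tfrac{2sq}{1-q}$. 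The hypothesis $s \geq \sqrt{\log(2(t+1))/\pi}$ is equivalent to $q \leq \tfrac{1}{2(t+1)}$, from which $\tfrac{2q}{1-q} \leq \tfrac{2}{2t+1} \leq \tfrac{1}{t}$ follows by direct computation, giving the desired 1D estimate in both directions simultaneously.

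The only mild subtlety, and essentially the only thing to watch for, is that the constant $\sqrt{\log(2(t+1))/\pi}$ in the hypothesis is tight. The leading term of the Poisson error, $q$ itself, would only demand $q \leq 1/(2t)$; the geometric-series correction factor $1/(1-q)$ tightens this to $q \leq 1/(2(t+1))$, which is precisely what the hypothesis delivers. Apart from this bookkeeping, the proof is a routine application of Poisson summation, so I would not expect any further obstacles.
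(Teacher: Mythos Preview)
The paper does not actually prove this lemma: it is stated as a ``useful standard lemma'' with a citation to Banaszczyk and Micciancio--Regev, and no argument is given. Your proof is correct and is essentially the standard one behind those references. The tensorization to one dimension is exact because $\rho_s$ and $\Z^n+c$ both factor coordinatewise; the Poisson summation step and Fourier transform of the Gaussian are correct; and your arithmetic check that $q \le \tfrac{1}{2(t+1)}$ forces $\tfrac{2q}{1-q} \le \tfrac{2}{2t+1} \le \tfrac{1}{t}$ goes through as written. (The clause ``let $X$ be distributed as $D_{L+c,s}$'' in the lemma statement is vestigial and plays no role in the conclusion, so you were right to ignore it.)
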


\begin{lemma}
\label{lem:gaussian-tail}
Let $X$ be drawn from a standard $n$-dimensional Gaussian $N(0,1)^n$, i.e., 
with density $\left(\frac{1}{\sqrt{2\pi}}\right)^n e^{-\frac{1}{2}\|x\|^2}$,
then for $t \geq 1$ we have that
\[
\Pr(\|X\| \geq t\sqrt{n}) \leq  e^{-\left(1-\frac{1+\ln(t^2)}{t^2}\right) \frac{1}{2} n t^2}
\]
\end{lemma}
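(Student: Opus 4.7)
The plan is to use the standard Chernoff/Laplace-transform approach applied to $\|X\|^2$, which is a $\chi^2$ random variable with $n$ degrees of freedom. Specifically, for any $\lambda \in [0, 1/2)$, I would use Markov's inequality in the form
\[
\Pr(\|X\|^2 \geq t^2 n) \;=\; \Pr\!\left(e^{\lambda\|X\|^2} \geq e^{\lambda t^2 n}\right) \;\leq\; e^{-\lambda t^2 n}\,\E\!\left[e^{\lambda\|X\|^2}\right].
\]
Since the coordinates $X_i$ are i.i.d.\ standard Gaussians, the moment generating function factorizes and a one-dimensional computation gives $\E[e^{\lambda X_i^2}] = (1-2\lambda)^{-1/2}$, hence $\E[e^{\lambda\|X\|^2}] = (1-2\lambda)^{-n/2}$.

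Plugging this in yields the family of bounds
\[
\Pr(\|X\| \geq t\sqrt{n}) \;\leq\; (1-2\lambda)^{-n/2}\, e^{-\lambda t^2 n}
\]
valid for all $\lambda \in [0,1/2)$. The next step is to optimize the right-hand side over $\lambda$. Differentiating the exponent $-\tfrac{n}{2}\ln(1-2\lambda) - \lambda t^2 n$ in $\lambda$ and setting the derivative to zero gives the optimal choice $\lambda^* = \tfrac{1}{2}(1 - 1/t^2)$, which lies in $[0,1/2)$ precisely because $t \geq 1$. With this choice, $1-2\lambda^* = 1/t^2$, so $(1-2\lambda^*)^{-n/2} = t^n$ and $\lambda^* t^2 n = \tfrac{1}{2}(t^2-1)n$.

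Substituting, the bound becomes
\[
t^n\, e^{-\tfrac{1}{2}(t^2-1)n} \;=\; e^{-\tfrac{1}{2}n\bigl(t^2 - 1 - 2\ln t\bigr)} \;=\; e^{-\bigl(1 - \tfrac{1 + \ln(t^2)}{t^2}\bigr)\tfrac{1}{2} n t^2},
\]
which is the desired expression after factoring $t^2$ out of the bracket. I do not expect a genuine obstacle here: the main subtlety is the algebraic rearrangement at the end to match the exact form stated in the lemma, and verifying that $\lambda^* \in [0,1/2)$ under the hypothesis $t \geq 1$ (at $t=1$ the bound degenerates to $1$, consistent with $\Pr(\|X\|\geq \sqrt{n}) \leq 1$).
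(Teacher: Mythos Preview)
Your argument is correct: the Chernoff/Laplace-transform bound on the $\chi^2_n$ variable $\|X\|^2$ with the optimal choice $\lambda^* = \tfrac{1}{2}(1-1/t^2)$ gives exactly the stated expression, and the algebraic identity $t^2 - 1 - 2\ln t = \bigl(1 - \tfrac{1+\ln(t^2)}{t^2}\bigr)t^2$ is what you need at the end. The paper itself does not supply a proof of this lemma; it is stated as a standard fact with a citation, so your Chernoff derivation is precisely the kind of argument one would fill in, and there is nothing to compare against.
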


The next lemma is an inequality that we will use in the main proof. 

\begin{lemma}
\label{lem:unif-gaus-comp}
Let $f:\R^n \rightarrow \R$ be a convex function. Let $U$ denote the uniform distribution on
$[-\frac{1}{2},\frac{1}{2}]^n$ and let $X$ denote the $n$-dimensional Gaussian $N(0,1/\sqrt{2\pi})$, 
i.e., with density $e^{-\pi\|x\|^2}$. Then we
have that
\[
\E[f(X)] \geq \E[f(U)]
\]
\end{lemma}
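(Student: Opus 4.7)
The plan is to first reduce the $n$-dimensional inequality to the $1$-dimensional case, and then establish the $1$-dimensional case by a direct sign-change argument comparing the two densities.

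For the reduction, I will use a hybrid/coordinate-by-coordinate coupling. Both $X$ and $U$ have independent coordinates, with $X = (X_1,\ldots,X_n)$ i.i.d.\ with density $e^{-\pi t^2}$ and $U = (U_1,\ldots,U_n)$ i.i.d.\ uniform on $[-1/2,1/2]$. Define the hybrid vectors $V_k = (X_1,\ldots,X_k,U_{k+1},\ldots,U_n)$, so $V_0 = U$ and $V_n = X$. Since $f$ is convex on $\R^n$, for any fixed values of the other coordinates the map $t \mapsto f(\ldots, t, \ldots)$ is convex in one variable. Assuming the one-dimensional claim, conditioning on all coordinates except the $(k+1)$-st and taking expectations gives $\E[f(V_{k+1})] \geq \E[f(V_k)]$; chaining these inequalities yields $\E[f(X)] \geq \E[f(U)]$.

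For the one-dimensional case I will exploit the fact that the two densities $\phi_X(t) = e^{-\pi t^2}$ and $\phi_U(t) = \mathbf{1}_{[-1/2,1/2]}(t)$ are both probability densities that are symmetric about $0$ and their difference $\phi_X - \phi_U$ has a simple sign pattern: it is $\geq 0$ on $\R \setminus [-1/2,1/2]$ (since $\phi_U = 0$ there) and $\leq 0$ on $[-1/2,1/2]$ (since $e^{-\pi t^2} \leq 1$ there). Let $\ell(t) = at + b$ be the unique affine function with $\ell(\pm 1/2) = f(\pm 1/2)$. By convexity of $f$, $f(t) \leq \ell(t)$ for $t \in [-1/2,1/2]$ and $f(t) \geq \ell(t)$ for $|t| > 1/2$, so $(f - \ell)$ has the \emph{same} sign as $(\phi_X - \phi_U)$ pointwise, and thus $(f - \ell)(\phi_X - \phi_U) \geq 0$ everywhere.

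The conclusion then follows from integrating this inequality and noting that $\int (\phi_X - \phi_U)\,dt = 1 - 1 = 0$ and $\int t(\phi_X - \phi_U)\,dt = 0 - 0 = 0$ (both distributions have total mass $1$ and mean $0$). Therefore $\int \ell(t)(\phi_X - \phi_U)\,dt = a \cdot 0 + b \cdot 0 = 0$, and so $\E[f(X)] - \E[f(U)] = \int f(\phi_X - \phi_U)\,dt = \int (f - \ell)(\phi_X - \phi_U)\,dt \geq 0$. The main technical point to be careful about is the integrability of $f$: since any convex $f \colon \R^n \to \R$ is bounded below by an affine function, $\E[f(X)]$ is well-defined in $(-\infty, +\infty]$, and the inequality is trivial when it is $+\infty$; continuity of $f$ on compact sets handles $\E[f(U)]$.
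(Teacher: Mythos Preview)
Your proof is correct and follows essentially the same approach as the paper: the paper also reduces to the one-dimensional case by swapping one coordinate at a time (phrased as induction rather than a hybrid chain, but identical in content), and the one-dimensional argument in the paper is exactly your affine-interpolant sign comparison at $\pm 1/2$. Your added remark on integrability is a nice touch that the paper glosses over.
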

\begin{proof}
We shall prove the statement by induction. Let $C = [-\frac{1}{2},\frac{1}{2}]$. We start with the base case $n=1$.  The
density of $U$ here is $I[x \in C]$, and the density for $X$ is $e^{-\pi x^2}$ (this density function 
is chosen so that the density is at most $1$ everywhere).

For our convex function $f:\R \rightarrow \R$, let $\phi$ denote the linear function satisfying $\phi(-\frac{1}{2}) =
f(-\frac{1}{2})$ and $\phi(\frac{1}{2}) = f(\frac{1}{2})$. By convexity of $f$ we note that $f(x) \leq \phi(x)$ for $x
\in C$ and $f(x) \geq \phi(x)$ for $x \in \R \setminus C$. Now we note that
\[
\E[f(X)] - \E[f(U)] = \int_\R f(x)(e^{-\pi x^2} - I[x \in C]) dx = \int_{\R \setminus C} f(x)(e^{-\pi x^2})dx
+ \int_C f(x)(e^{-\pi x^2} - 1)dx
\]
For $x \in \R \setminus C$, we have that $e^{-\pi x^2} \geq 0$ and $f(x) \geq \phi(x)$, and hence
\[
\int_{\R \setminus C} f(x)(e^{-\pi x^2}) \geq \int_{\R \setminus C} \phi(x)(e^{-\pi x^2}) \text{.}
\]
For $x \in C$, we have that $e^{-\pi x^2} \leq 1$ and that $f(x) \leq \phi(x)$, and hence
\[
\int_C f(x)(e^{-\pi x^2} - 1) \geq \int_C \phi(x)(e^{-\pi x^2} - 1)
\]
So we see that
\begin{align*}
\int_{\R \setminus C} f(x)(e^{-\pi x^2})dx + \int_C f(x)(e^{-\pi x^2} - 1)dx
\geq& \int_{\R \setminus C} \phi(x)(e^{-\pi x^2})dx + \int_C f(x)(e^{-\pi x^2} - 1)dx \\
=& \int_\R \phi(x)(e^{-\pi x^2} - I[x \in C])dx = \E[\phi(X-U)] \\
=& \phi(\E[X-U]) = \phi(0) = 0  \text{.}
\end{align*}
Here the last equalities follow since $\phi$ is linear and both $X$ and $U$ have mean $0$. The base
case is thus proven.

We now assume that the claim is true for $n \geq 1$ and prove it for $n+1$. Note that $X = (X_1,\dots,X_{n+1})$
where the $X_i$s are i.i.d. gaussians with density $e^{-\pi x^2}$, and that $U = (U_1,\dots,U_{n+1})$
where the $U_i$s are i.i.d. uniform random variables on $C$. We first show that
\[
\E[f(X_1,\dots,X_{n+1})] \geq \E[f(X_1,\dots,X_n,U_{n+1})]
\]
To see this, note that
\[
\E[f(X_1,\dots,X_{n+1})] = \int_{\R^n} e^{-\pi(\sum_{i=1}^n x_i^2)} \int_\R f(x_1,\dots,x_{n+1}) e^{-\pi x_{n+1}^2}
dx_{n+1} \dots dx_1
\]
Now by convexity of $f$, we see that for any $x_1,\dots,x_n \in \R^n$ the function $g(y) = f(x_1,\dots,x_n, y)$ is a
convex function from $\R$ to $\R$. Therefore, by the analysis of the base case, we have that
\begin{align*}
\int_{\R^n} e^{-\pi(\sum_{i=1}^n x_i^2)} &\int_\R f(x_1,\dots,x_{n+1}) e^{-\pi x_{n+1}^2} dx_{n+1} \dots dx_1 \\
&\geq \int_{\R^n} e^{-\pi(\sum_{i=1}^n x_i^2)} \int_\R f(x_1,\dots,x_{n+1}) I[x_{n+1} \in C] dx_{n+1} \dots dx_1 \\
&= \E[f(X_1,\dots,X_n,U_{n+1})]
\end{align*}
as needed. Next by convexity of $f$, we get that the function
\[
g(x_1,\dots,x_n) = \E[f(x_1,\dots,x_n,U_{n+1})]
\]
is also convex. Therefore by the induction hypothesis, we get that
\[
\E[f(X_1,\dots,X_n,U_{n+1})] = \E[g(X_1,\dots,X_n)] \geq \E[g(U_1,\dots,U_n)] = \E[f(U_1,\dots,U_{n+1})]
\]
as needed.
\end{proof}

We are now ready for the main theorem of this section.

\begin{theorem}
\label{thm:l-norm-comp}
Let $s = \frac{1}{\sqrt{2\pi}} \sqrt{\frac{\log(2(2n+1))}{\pi}}$ and $C_s = \frac{1}{2s}[-1,1]^n$.
Define $D \subseteq \R^n$ as
\[
D = \left(\frac{1}{s} \Z^n\right) \bigcap \left(C_s + 2\sqrt{n}B_2^n\right)
\quad \text{ and } \quad p_x = \int_{C_s} \gamma_n(x + y)dy
\]
for $x \in D$. Then for any symmetric convex body $K \subseteq \R^n$, we have that
\[
\left(1-\frac{1}{s}\right) l(K) \leq \tilde{l}(K) \leq \left(1+\frac{1}{s}\right) l(K)
\]
where $\tilde{l}(K) = \sum_{x \in D} p_x \|x\|_K$.
\end{theorem}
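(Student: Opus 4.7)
The plan is to decompose $\R^n$ into translates of the fundamental cell $C_s$ under the lattice $\frac{1}{s}\Z^n$ and compare $\tilde l(K)$ and $l(K)$ term-by-term on those cells. Since $C_s$ tiles $\R^n$, I can write
\[
l(K) = \sum_{x \in \frac{1}{s}\Z^n}\int_{C_s}\|x+y\|_K\gamma_n(x+y)\,dy, \qquad \tilde l(K) = \sum_{x \in D}\int_{C_s}\|x\|_K\gamma_n(x+y)\,dy,
\]
so their difference splits cleanly into a rounding error from cells with $x \in D$ and a Gaussian tail error from cells with $x \notin D$.

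To control the rounding error I first apply the triangle inequality $|\|x+y\|_K - \|x\|_K| \leq \|y\|_K$ and extend the sum to all $x \in \frac{1}{s}\Z^n$, bounding it by $\int_{C_s}\|y\|_K\,\rho(y)\,dy$ where $\rho(y) = \sum_x \gamma_n(x+y)$. To bound $\rho$, I rescale: $\gamma_n(z) = (2\pi)^{-n/2}\rho_{\sqrt{2\pi}}(z)$, and the substitution $x = u/s$ turns the sum into $\sum_{u \in \Z^n} \rho_{s\sqrt{2\pi}}(u + sy)$. Since $s\sqrt{2\pi} = \sqrt{\log(2(2n+1))/\pi}$ by the choice of $s$, Lemma~\ref{lem:zn-ineqs} with $t=2n$ yields $\rho(y) \leq (1 + 1/(2n))^n s^n \leq \sqrt{e}\,s^n$. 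This reduces the rounding error to $\sqrt{e}\cdot \E_{U \sim \mathrm{Unif}(C_s)}[\|U\|_K]$. Changing variables $V = sU$ and applying Lemma~\ref{lem:unif-gaus-comp} to the convex function $v \mapsto \|v\|_K$, where $V$ is uniform on $[-1/2,1/2]^n$, gives $\E_V[\|V\|_K] \leq \E_{X'}[\|X'\|_K]$ for $X'$ with density $e^{-\pi\|x\|^2}$, i.e. $X' = Z/\sqrt{2\pi}$ with $Z \sim N(0,I)$. Hence $\E_U[\|U\|_K] \leq l(K)/(s\sqrt{2\pi})$, and the rounding contribution is at most $\sqrt{e/(2\pi)}\,l(K)/s < l(K)/s$.

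For the tail, I observe that $D + C_s \supseteq 2\sqrt{n}B_2^n$: for any $z$ with $\|z\|_2 \leq 2\sqrt{n}$, writing $z = x+y$ with $x \in \frac{1}{s}\Z^n$ and $y \in C_s$ gives $x \in 2\sqrt{n}B_2^n + C_s$ (using symmetry of $C_s$), so $x \in D$. Thus the tail contribution is at most $\int_{\|z\|_2 \geq 2\sqrt{n}}\|z\|_K \gamma_n(z)\,dz$. Using the preconditioning $B_2^n \subseteq K$ to bound $\|z\|_K \leq \|z\|_2$, Cauchy--Schwarz combined with Lemma~\ref{lem:gaussian-tail} yields a bound of $\sqrt{n}\cdot e^{-\Omega(n)}$. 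Since $K \subseteq nB_2^n$ forces $l(K) = \Omega(1/\sqrt{n})$, the tail is negligible next to $l(K)/s$. Combining the two bounds gives $|l(K) - \tilde l(K)| \leq l(K)/s$, which is equivalent to both claimed inequalities.

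The main obstacle is the rounding step: a naive Lipschitz estimate yields only an $O(\sqrt{n}/s)$ \emph{absolute} error, far too large because $l(K)$ itself can be as small as $\Omega(1/\sqrt{n})$. The crucial trick is to use Lemma~\ref{lem:unif-gaus-comp} to convert the average of $\|\cdot\|_K$ over the small cube $C_s$ into a constant fraction of $l(K)$ itself, so that the discretization error becomes genuinely multiplicative of order $1/s$ rather than additive. Everything else---tiling identities, the standard lattice-Gaussian sum estimate, and Gaussian tail concentration---is routine in comparison.
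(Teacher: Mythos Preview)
Your decomposition and treatment of the rounding error are essentially the paper's Claims~2 and~3: both use the triangle inequality cell-by-cell, the lattice-Gaussian sum bound of Lemma~\ref{lem:zn-ineqs}, and then Lemma~\ref{lem:unif-gaus-comp} to convert $\E_U[\|U\|_K]$ into a constant fraction of $l(K)$. That part is correct and matches the paper.

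The gap is in your tail step. The theorem is stated for an \emph{arbitrary} symmetric convex body $K$; no sandwiching $B_2^n \subseteq K \subseteq nB_2^n$ is assumed. Your tail argument uses both halves of that sandwiching: $B_2^n \subseteq K$ to get $\|z\|_K \leq \|z\|_2$, and $K \subseteq nB_2^n$ to get $l(K) = \Omega(1/\sqrt n)$. Neither is available here, and the theorem is later applied (via $\tilde f(A) \leq 2 f(A)$ in Lemma~\ref{lem:lip-conv}) to norms $x \mapsto \|Ax\|_K$ for arbitrary $A$, i.e.\ to bodies $A^{-1}K$ that need not be well-sandwiched even when $K$ is. So the preconditioning cannot be imported into this proof.

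The paper handles the tail without any dependence on $K$ by passing to polar coordinates: since $\|r\theta\|_K = r\,\|\theta\|_K$, the angular factor $\int_{S^{n-1}}\|\theta\|_K\,d\theta$ appears in both the truncated and the full integral and cancels in their ratio. What remains is
\[
\frac{\int_0^{2\sqrt n} r^{n} e^{-r^2/2}\,dr}{\int_0^{\infty} r^{n} e^{-r^2/2}\,dr}
= \Pr\bigl(\|X\|_2 \le 2\sqrt{n}\bigr)
\quad\text{for } X \sim N(0,I_{n+1}),
\]
which is at least $1 - e^{-0.3n}$ by Lemma~\ref{lem:gaussian-tail}. This gives a \emph{multiplicative} tail bound valid for every symmetric $K$, which is exactly what is needed to combine with your rounding estimate and conclude $(1 \pm 1/s)\,l(K)$.
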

\begin{proof}
The proof proceeds as follows. First we note in Claim 1 below that we can restrict attention to a ball of radius
$2\sqrt{n}$ via a tail bound on the standard Gaussian. Then, in Claim 2, we bound the error of the discrete approximation 
computed in terms of the norm of a random point from $U$ (uniform in $[-1/2, 1/2]^n$). 
Finally, using Lemma \ref{lem:unif-gaus-comp}, we can bound this norm by the $\ell$-estimate itself (Claim 3 below).

{\bf Claim 1.} 
\[
(1-e^{-0.3n}) \int_{\R^n} \|x\|_K \gamma_n(x) dx \leq \int_{D+C_s} \|x\|_K \gamma_n(x) dx
                                                \leq \int_{\R^n} \|x\|_K \gamma_n(x) dx
\]

{\bf Claim 2.}
\[
\left|\sum_{x \in D} p_x \|x\|_K - \int_{C+D_s} \|x\|_K \gamma_n(x)dx\right| \leq \frac{2}{s} \E[\|U\|_K].
\]

{\bf Claim 3.}
\[
\E[\|U\|_K] \leq \frac{1}{\sqrt{2\pi}} \E[\|X\|_K]
\]
where $X$ is a standard $n$-dimensional Gaussian.

We prove these claims presently. 

Combining Claims $(1),(2),$ and $(3)$, we get the upper bound
\begin{align*}
\sum_{x \in D} p_x \|x\|_K
        &\leq \int_{D+C_s} \|x\|_K \gamma_n(x)dx + \frac{2}{s}\E[\|U\|_K] \\
        &\leq \E[\|X\|_K] + \frac{\sqrt{2}}{\sqrt{\pi} s}\E[\|X\|_K] = \left(1+\frac{\sqrt{2}}{\sqrt{\pi}s}\right)\E[\|X\|_K]
        \text{, }
\end{align*}
and the lower bound
\begin{align*}
\sum_{x \in D} p_x \|x\|_K &\geq \int_{D+C_s} \|x\|_K \gamma_n(x) dx - \frac{2}{s}\E[\|U\|_K] \\
                           &\geq \left(1-e^{-0.3n}\right)\E[\|X\|_K] - \frac{\sqrt{2}}{\sqrt{\pi}s}\E[\|X\|_K]
                           = \left(1-e^{-0.3n}-\frac{\sqrt{2}}{\sqrt{\pi}s}\right)\E[\|X\|_K] \text{.}
\end{align*}
Since $e^{-0.3n}+\frac{\sqrt{2}}{\sqrt{\pi}s} \leq \frac{1}{s}$ for $n$ large enough, we get the claimed result.

Now we prove the claims. 
\paragraph{Proof of Claim 1:} Since the scaled cube $C_s$ tiles space with respect to the lattice $\frac{1}{s} \Z^n$,
we get by construction of $D$ that $2\sqrt{n}B_2^n \subseteq D+C_s$. Since $\|\cdot\|_K$ is non-negative, we
clearly have that
\[
\int_{2\sqrt{n}B_2^n} \|x\|_K \gamma_n(x)dx \leq \int_{D+C_s} \|x\|_K \gamma_n(x)dx \leq 
\int_{\R^n} \|x\|_K \gamma_n(x)dx
\]
Expressing the integral in polar coordinates, we have
\begin{align*}
\int_{2\sqrt{n}B_2^n} \|x\|_K \gamma_n(x) \, dx &= 
 \left(\frac{1}{\sqrt{2\pi}}\right)^n\int_{S^{n-1}} \int_0^{2\sqrt{n}} \|r\theta\|_K e^{-\frac{1}{2}r^2}r^{n-1} dr d\theta\\
                              &= \left(\frac{1}{\sqrt{2\pi}}\right)^n
                    \int_{S^{n-1}} \int_0^{2\sqrt{n}} \|\theta\|_K e^{-\frac{1}{2}r^2} r^n dr d\theta.
\end{align*}
Thus, 
\begin{align*}
\frac{\int_{2\sqrt{n}B_2^n} \|x\|_K \gamma_n(x) \, dx}{\int_{\R^n} \|x\|_K \gamma_n(x) \, dx} &=
\frac{\int_{S^{n-1}} \int_0^{2\sqrt{n}} \|\theta\|_K e^{-\frac{1}{2}r^2} r^n dr d\theta}{\int_{S^{n-1}} \int_0^{\infty} \|\theta\|_K e^{-\frac{1}{2}r^2} r^n dr d\theta}\\
&\ge \frac{\int_0^{2\sqrt{n}} e^{-\frac{1}{2}r^2} r^n dr}{\int_0^{\infty} e^{-\frac{1}{2}r^2} r^n dr}\\
&= 1 - \int_{\R^{n+1} \setminus 2\sqrt{n}B_2^n} \gamma_{n+1}(x) dx \\
&\geq 1 - e^{-(1-\frac{1+\ln(\frac{4n}{n+1})}{\frac{4n}{n+1}})2n} \geq 1 - e^{-0.3n}
\end{align*}
using Lemma \ref{lem:gaussian-tail} (i.e., the standard Gaussian tailbound)
with $t = 2\sqrt{\frac{n}{n+1}}$, and noting that $n \geq 1$.
This proves the claim.

\paragraph{Proof of Claim 2:}
For $y \in \R^n$, let $r(y)$ denote the closest vector to $y$ in
$\frac{1}{s}\Z^n$ under the $l_2$ norm. Given the structure of $\Z^n$, a simple
computation yields that
\[
r(y) = \left(\frac{\round{sy_1}}{s}, \dots, \frac{\round{sy_n}}{s}\right)
\]
Furthermore, for $x \in \frac{1}{s}\Z^n$ we have that $r(y) = x$ iff $y \in x+C_s$. Now we see that
\[
\sum_{x \in D} p_x \|x\|_K = \sum_{x \in D} \int_{x+C_s} \|x\|_K \gamma_n(y) dy
                           = \int_{D+C_s} \|r(y)\|_K \gamma_n(y) dy
\]
From here, using the triangle inequality, we get that
\[
\int_{D+C_s} \|r(y)\|_K \gamma_n(y) dy \leq \int_{D+C_s} (\|y\|_K+\|y-r(y)\|_K) \gamma_n(y) dy
                                       = \int_{D+C_s} \|y\|_K \gamma_n(y) + \int_{D+C_s} \|y-r(y)\|_K \gamma_n(y) dy
\]
Similarly, we also get that
\[
\int_{D+C_s} \|r(y)\|_K \gamma_n(y) dy \geq \int_{D+C_s} \|y\|_K \gamma_n(y) - \int_{D+C_s} \|y-r(y)\|_K \gamma_n(y) dy
\]
Hence to get the desired upper and lower bounds on $\sum_{x \in D} p_x \|x\|_K$, we need only upper bound
the quantity $\int_{D+C_s} \|y-r(y)\|_K \gamma_n(y) dy$. Now we note that
\begin{align*}
\int_{D+C_s} \|y-r(y)\|_K \gamma_n(y) dy &= \int_{C_s} \|c\|_K \sum_{y \in D + c} \gamma_n(y) dc \\
   &= \left(\frac{1}{s}\right)^n \int_{C_1} \left\|\frac{c}{s}\right\|_K
           \sum_{y \in D + \frac{c}{s}} \gamma_n(y) dc \\
   &= \left(\frac{1}{s}\right)^n \int_{C_1} \left\|\frac{c}{s}\right\|_K
           \sum_{y \in sD + c} \gamma_n\left(\frac{y}{s}\right) dc \\
   &= \left(\frac{1}{\sqrt{2\pi}s}\right)^{n} ~ \frac{1}{s} ~ \int_{C_1} \|c\|_K
           \sum_{y \in sD + c} e^{-\pi\|\frac{y}{\sqrt{2\pi} s}\|^2} dc
\end{align*}
Next note that $sD = \Z^n \cap \left(C_1 + 2\sqrt{n}sB_2^n\right)$. Therefore by Lemma \ref{lem:zn-ineqs} we have that
\begin{align*}
\left(\frac{1}{\sqrt{2\pi}s}\right)^{n} ~ \frac{1}{s} ~ \int_{C_1} \|c\|_K
          \sum_{y \in sD + c} e^{-\pi\|\frac{y}{\sqrt{2\pi} s}\|^2} dc
\leq& ~ \left(\frac{1}{\sqrt{2\pi}s}\right)^{n} ~ \frac{1}{s} ~ \int_{C_1} \|c\|_K
           \sum_{y \in \Z^n + c} e^{-\pi\|\frac{y}{\sqrt{2\pi} s}\|^2} dc \\
\leq& ~ \left(\frac{1}{\sqrt{2\pi}s}\right)^{n} ~ \frac{1}{s} ~ \int_{C_1} \|c\|_K (\sqrt{2\pi} s)^n (1+\frac{1}{2n})^n dc \\
\leq& ~ \frac{2}{s} \int_{C_1} \|c\|_K dc = \frac{2}{s} \E[\|U\|_K]
\end{align*}

\paragraph{Proof of Claim 3:} We wish to show that
\[
\E[\|U\|_K] \leq \frac{1}{\sqrt{2\pi}} \E[\|X\|_K] = \E[\|\frac{1}{\sqrt{2\pi}}X\|_K]
\]
A simple computation gives that $\frac{1}{\sqrt{2\pi}}X$ has density $e^{-\pi\|x\|^2}$ for $x \in \R^n$.
Since $\|\cdot\|_K$ is a convex function, the above inquality follows directly from Lemma \ref{lem:unif-gaus-comp}.
The claim thus follows.
\end{proof}

\subsection{Efficiency of solving the convex program}

In what follows we will assume that our symmetric convex body $K$ is well sandwiched, i.e. that $B_2^n \subseteq K
\subseteq nB_2^n$. As mentioned previously, this can be achieved by GLS type rounding using the ellipsoid algorithm.

We recall the functions $f,\tilde{f}:\R^{n \times n} \rightarrow \R$
\[
f(A) = \int_{\R^n} \|Ax\|_K \gamma_n(x) dx \quad \text{ and } \quad \tilde{f}(A) = \sum_{x \in D} p_x \|Ax\|_K
\]
We will consider an approximate version of Program \ref{ell-prog}:
\begin{align}
\label{ell-prog-approx}
\begin{split}
\inf \tilde{f}(A) &= \sum_{x \in D} p_x \|Ax\|_{K} \\
\text{subject to} & \\
	          & \quad A \succeq 0 \\
                  & \quad \det(A) \geq 1
\end{split}
\end{align}

The main result of this section is the following:

\begin{theorem} Let $\tilde{A}$ denote an optimal solution to Program \ref{ell-prog-approx}. Then for $0 < \eps \leq 1$, a
matrix $A \in \R^{n \times n}$ satisfying $\tilde{f}(A) \leq (1+\eps)\tilde{f}(\tilde{A})$ can be computed in
deterministic $\poly(n, \ln \frac{1}{\eps})O(\sqrt{\log n})^n$ time. Furthermore, let $A \in \R^{n \times n}$
be any $2$-approximate solution to \ref{ell-prog-approx}, then for $E = \frac{\sqrt{n}}{\tilde{f}(A)} AB_2^n$ we have that
\[
N(E,K) = 2^{O(n)} \quad N(K,E) = O(\log n)^n
\]
\end{theorem}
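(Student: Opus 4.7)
The plan is to handle the theorem in two halves: approximately solving Program~\ref{ell-prog-approx} via the Gr\"otschel--Lov\'asz--Schrijver (GLS) ellipsoid machinery, and then converting a near-optimal solution into the claimed covering bounds.

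For the optimization half, I would first verify convexity of the program (the set $\{A \succeq 0 : \det(A) \geq 1\}$ is convex via concavity of $\det^{1/n}$ on the PSD cone, and $\tilde f$ is a nonnegative sum of the convex maps $A \mapsto \|Ax\|_K$), and then assemble the ingredients needed by GLS: sandwich bounds on near-optima, a Lipschitz bound on $\tilde f$, and a nontrivial lower bound on the optimum. Evaluating at $A = I$ gives $\tilde f(I) \leq O(\sqrt n)$ using $K \supseteq B_2^n$; combined with the lower bound $\tilde f(A) \geq \Omega(\|A\|_{\mathrm{op}}/n)$ (from $K \subseteq n B_2^n$ and $\E\|AX\|_2 \geq \|A\|_{\mathrm{op}} \sqrt{2/\pi}$), any $A$ with $\tilde f(A) \leq 2\sqrt n$ has $\|A\|_{\mathrm{op}} = O(n^{3/2})$, and $\det A \geq 1$ then forces $\|A^{-1}\|_{\mathrm{op}} \leq \|A\|_{\mathrm{op}}^{n-1} = 2^{O(n\log n)}$, giving a sandwich log-ratio $\poly(n)$. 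The Lipschitz bound $|\tilde f(A) - \tilde f(B)| \leq O(\sqrt n)\|A - B\|_{\mathrm{op}}$ follows from $\sum_{x \in D} p_x \|x\|_2 = O(\sqrt n)$ combined with $\|y\|_K \leq \|y\|_2$. These suffice for GLS to produce a $(1+\epsilon)$-approximate minimizer in $\poly(n, \log(1/\epsilon))$ evaluations of $\tilde f$, each of cost $|D| = O(\sqrt{\log n})^n$ oracle calls.

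For the ellipsoid half, let $A$ be a $2$-approximate minimizer, set $K' = A^{-1} K$ and $m = \sqrt n/\tilde f(A)$. By Theorem~\ref{thm:l-norm-comp}, $\tilde f(A)$ and $f(A) = \ell(K')$ agree to within a factor $(1 \pm 1/s)$, so $A$ is an $O(1)$-approximate minimizer of $f$ over $\{A \succeq 0 : \det A \geq 1\}$. Since the $\ell$-position of Theorem~\ref{thm:pis-l-pos} is a feasible point achieving $\ell(\cdot)\ell(\cdot^*) = O(n\log n)$, a first-order/duality argument (pairing the KKT condition of the $f$-program on $K$ at $A$ with the corresponding condition on $K^*$ at $A^{-T}$) transfers Pisier's product bound to $K'$, i.e.\ $\ell(K')\ell(K'^*) = O(n\log n)$. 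Given this, the proof of Theorem~\ref{thm:l-pos-cov} applies verbatim to $K'$: Lemma~\ref{lem:l-pos-ess} yields medians with $m' m'^* = \Omega(1/\log n)$; Blaschke--Santal{\'o} and the half-volume property give $\vol(K') \leq O(\log n)^n \vol(m B_2^n \cap K')$; and Lemma~\ref{lem:cov-est} converts these into $N(K', m B_2^n) \leq O(\log n)^n$ and $N(m B_2^n, K') \leq 2 \cdot 3^n$. Since covering numbers are invariant under simultaneous application of the invertible map $A$, we conclude $N(K, E) \leq O(\log n)^n$ and $N(E, K) \leq 2^{O(n)}$ for $E = m A B_2^n$.

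The main obstacle is the structural step that a near-optimal $A$ for Program~\ref{ell-prog-approx} inherits Pisier's product bound $\ell(K')\ell(K'^*) = O(n\log n)$: minimizing $\ell(TK)$ over $SL(n)$ is not syntactically the same as minimizing Pisier's product, so one must either pair up the $f$-optima for $K$ and $K^*$ via KKT/duality, or else lower bound $\vol(K'^*)$ directly from the $f$-optimality condition and combine with Blaschke--Santal{\'o}. Everything else -- convexity, Lipschitz, sandwich bounds, GLS, and the final invocation of Lemma~\ref{lem:cov-est} -- is routine.
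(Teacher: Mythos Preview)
The optimization half is fine and matches the paper's approach (convexity, sandwich bounds, Lipschitz estimate, then GLS with each evaluation costing $|D| = O(\sqrt{\log n})^n$). The gap is precisely where you flag it: the ``structural step'' that a near-optimal $A$ for Program~\ref{ell-prog-approx} yields $\ell(K')\ell(K'^{*}) = O(n\log n)$ for $K' = A^{-1}K$. Minimizing $\ell(TK)$ over $SL(n)$ is \emph{not} the same as minimizing Pisier's product, and your KKT/duality sketch does not make this transfer precise; worse, a $2$-approximate minimizer satisfies no first-order condition at all, so even if the pairing worked at the exact optimum it would not apply to the $A$ you actually produce. This is a real hole, not a routine detail.

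The paper sidesteps the whole issue. It never bounds $\ell(K'^{*})$ or touches the dual of $K'$. Instead it uses $\vol(K)$ as the intermediary: evaluating $f$ at (the symmetrization of) the $\ell$-position $T^{-1}$, which is a \emph{feasible} point for Program~\ref{ell-prog}, and reading off from the proof of Theorem~\ref{thm:l-pos-cov} that $\ell(TK) = O\bigl(\log n / \vol(K)^{1/n}\bigr)$, one gets $f(A^{*}) = O\bigl(\log n / \vol(K)^{1/n}\bigr)$ and hence the same bound (up to constants) for any $2$-approximate $A$. Now simply compute $\vol(E)$: since $\det(A) \geq 1$ and $\sqrt{n}\,\vol(B_2^n)^{1/n} = \Theta(1)$,
\[
\vol(E) = \Bigl(\tfrac{\sqrt{n}}{\tilde f(A)}\Bigr)^n \det(A)\,\vol(B_2^n) \geq \Omega\Bigl(\tfrac{1}{\log n}\Bigr)^n \vol(K),
\]
while Lemma~\ref{lem:l-pos-ess} applied to $A^{-1}K$ gives $c = \Omega(1)$ with $\vol(cE \cap K) \geq \tfrac{1}{2}\vol(cE)$. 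Lemma~\ref{lem:cov-est} then delivers both covering bounds directly. So the fix is not to repair your duality argument but to drop it: bound the optimal objective in terms of $\vol(K)$, then compare $\vol(E)$ to $\vol(K)$ directly.
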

\begin{proof}
Let $A^*$ denote an optimal solution to \ref{ell-prog}. Then by Theorem \ref{thm:l-norm-comp} we have that
\begin{equation}
\label{eq:approx-equiv}
(1-\frac{1}{s})f(A^*) \leq (1-\frac{1}{s})f(\tilde{A}) \leq \tilde{f}(\tilde{A}) 
                      \leq \tilde{f}(A^*) \leq (1+\frac{1}{s})f(A^*)
\end{equation}
where the first inequality follows by optimality of $A^*$, and the third inequality by optimality of $\tilde{A}$.

\paragraph{Claim 1:} $f(A^*) = O\left(\frac{\log n}{\vol(K)^{\frac{1}{n}}}\right)$.
Pick a linear transformation $T \in SL(n)$ minimizing $l(TK)l(T^*K)$. From the proof of Lemma \ref{thm:l-pos-cov},
for some $c_1,c_2 = \Theta(1)$, letting $m = c_1 ~ \frac{\sqrt{n}}{l(TK)}$ we have that
\[
\frac{1}{2} \vol(mT^{-1}B_2^n) = \vol(mT^{-1}B_2^n \cap K) \geq \left(\frac{c_2}{\log n}\right)^n \vol(K)
\]
Now 
\[
\vol(mT^{-1}B_2^n) = \vol(B_2^n) \det(T^{-1}) m^n = \vol(B_2^n) \frac{m^n}{\det(T)} = \vol(B_2^n) m^n \text{.}
\]
Therefore
\[
\vol(B_2^n)^{\frac{1}{n}} m \geq \frac{c_2}{\log n} ~ \vol(K)^{\frac{1}{n}}
\Rightarrow \frac{c_1}{c_2} ~ \vol(B_2^n)^{\frac{1}{n}} ~ \sqrt{n} ~ \frac{\log n}{\vol(K)^{\frac{1}{n}}} \geq l(TK) 
\Rightarrow l(TK) = O\left(\frac{\log n}{\vol(K)^{\frac{1}{n}}}\right)
\]
Using the identity $\|x\|_{TK} = \|T^{-1}x\|_{K}$ we see that
\[
l(TK) = \int_{x \in \R^n} \|x\|_{TK} \gamma_n(x) dx = \int_{x \in \R^n} \|T^{-1}x\|_{K} \gamma_n(x) = f(T^{-1})
\]
Let $A = T^{-1}$. For a standard gaussian vector $X$ is $\R^n$, we note that $A_s = (A^tA)^{\frac{1}{2}}X$, where $A_s$
is the unique positive definite square root of $A^tA$, is identically distributed to $AX$.  Therefore $f(A_s) =
E[\|A_sX\|_K] = E[\|AX\|_K] = f(A) = f(T^{-1})$. Since $A_s = (A^tA)^{\frac{1}{2}} \succeq 0$ and
$\det(A_s) = |\det(A)| = \det(T^{-1}) = 1$, we have that $A_s$ is feasible for Program \ref{ell-prog}.
Since $A^*$ is the optimal solution to \ref{ell-prog} we have that
\[
f(A^*) \leq f(A_s) = f(T^{-1}) = O\left(\frac{\log n}{\vol(K)^{\frac{1}{n}}}\right)
\]
as needed.

\paragraph{Claim 2:} The Programs \ref{ell-prog} and \ref{ell-prog-approx} are convex.

By Lemma \ref{lem:lip-conv}, we know that both $f$ and $\tilde{f}$ are convex over the feasible region.
In both programs, the feasible region is the set of positive semi-definite matrices of determinant greater
than $1$, which is clearly convex.

\paragraph{Claim 3:} Program \ref{ell-prog-approx} can be solved to within $(1+\eps)$ multiplicative error in
deterministic \linebreak $\poly(n, \ln \frac{1}{\eps}) O(\sqrt{\log n})^n$ time.

Given that $B_2^n \subseteq K \subseteq nB_2^n$, by Lemma \ref{lem:region-bounds} we may constrain convex Program
\ref{ell-prog-approx} to the well-bounded region $R$ without removing any optimal solutions. Now by Lemma \ref{lem:lip-conv}
(3) the objective function is $2\sqrt{n}$ Lipshitz over operator norm (and hence over the Frobenius norm), and by
Lemma \ref{lem:region-bounds} (3) that the ratio of min and max value of the objective function over $R$ is
$O(n^\frac{5}{2})$. Given all this, we may apply the ellipsoid algorithm (see \cite{GLS} Theorem 4.3.13 for example)
\dnote{Put this in prelims maybe?} to solve the convex program \ref{ell-prog-approx} to within $(1+\eps)$ multiplicative
error using at most $\poly(n, \ln \frac{1}{\eps})$ evaluations of $\tilde{f}$ and arithmetic operations. Since each
evaluation of $\tilde{f}$ can be computed in deterministic $O(\sqrt{\log n})^n$ time, this proves the claim.

\paragraph{Claim 4:} Let $A$ be a $2$-approximation for the program \ref{ell-prog-approx}. Then the ellipsoid
$E = \frac{\sqrt{n}}{\tilde{f}(A)} AB_2^n$ satisfies $N(K,E) = O(\log n)^n$ and $N(E,K) = 2^{O(n)}$.

Let $\tilde{A}$ be as above. By Equation \eqref{eq:approx-equiv}, Lemma
\ref{thm:l-norm-comp} and Claim 1, we have that
\begin{align*}
f(A) &\leq \frac{s}{s-1} \tilde{f}(A) \leq 2 ~\frac{s}{s-1}~ \tilde{f}(\tilde{A}) 
     \leq \frac{s+1}{s-1} f(A^*) = O\left(\frac{\log n}{\vol(K)^{\frac{1}{n}}}\right) \text{.}
\end{align*}
By Theorem \ref{thm:l-norm-comp}, we note that $\frac{\sqrt{n}}{\tilde{f}(A)} = \Theta(1) \frac{\sqrt{n}}{f(A)}$. Hence
by Lemma \ref{lem:l-pos-ess}, there exists $c \leq 1$, where $c = \Omega(1)$, such that $\vol(cE \cap K) = \frac{1}{2}
\vol(cE)$. Now note that
\begin{align*}
\vol(cE) &= \left(\frac{c \sqrt{n}}{\tilde{f}(A)}\right)^n \det(A) \vol(B_2^n) 
         \geq \left(\frac{c \sqrt{n} \vol(B_2^n)^{\frac{1}{n}}}{\tilde{f}(A)}\right)^n 
         = \Omega\left(\frac{1}{\log n}\right)^n \vol(K) 
\end{align*}
Now since $\vol(E \cap K) \geq \vol(cE \cap K) = \frac{1}{2} \vol(cE) = \frac{1}{2} c^n \vol(E)$ and $\vol(E \cap K)
\geq \vol(cE \cap K) = \Omega\left(\frac{1}{\log n}\right)^n \vol(K)$, applying the covering estimates of Lemma
\ref{lem:cov-est} yields the claim.

\end{proof}

\begin{lemma}\label{lem:lip-conv}\hspace{1em}
\begin{enumerate}
\item $f$,$\tilde{f}$ define norms on $\R^{n \times n}$.
\item $A^tA \succeq B^tB \Rightarrow f(A) \geq f(B)$. 
\item $|f(A)-f(B)|,|\tilde{f}(A)-\tilde{f}(B)| \leq 2\sqrt{n}\|A-B\|$, where $\|A-B\|$ denote the operator norm of $A-B$.
\end{enumerate}
\end{lemma}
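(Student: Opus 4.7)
The plan is to establish each of the three claims in turn. For (1), all the norm axioms for $f$ and $\tilde{f}$ transfer from those of $\|\cdot\|_K$: nonnegativity, positive-homogeneity, and the triangle inequality follow because each of $f, \tilde{f}$ is a nonnegative weighted average of the functionals $A \mapsto \|Ax\|_K$. Definiteness for $f$ holds since if $A \neq 0$ then $\{x : Ax \neq 0\}$ is the complement of a proper subspace and therefore has full Gaussian mass, forcing $f(A) > 0$. For $\tilde{f}$, the inequality $1/s \leq 3\sqrt{n}$ puts $\pm e_i/s \in D$, so $D$ spans $\R^n$; since each $p_x > 0$, $\tilde{f}(A) = 0$ then forces $A e_i = 0$ for all $i$.

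For (2), I would first observe that rotation-invariance of the standard Gaussian gives $AX \equiv (AA^t)^{1/2} X$ in distribution, so $f$ depends on $A$ only through $AA^t$. This lets me reduce to the case $A, B \succeq 0$, where $A^tA = A^2 = AA^t$, with hypothesis $A^2 \succeq B^2$. Setting $C := (A^2 - B^2)^{1/2} \succeq 0$ and introducing $Y \sim N(0, I)$ independent of $X$, the vector $BX + CY$ has covariance $B^2 + C^2 = A^2$ and so is distributed identically to $AX$; in particular $f(A) = \E\|BX + CY\|_K$. The symmetry $Y \equiv -Y$ gives also $f(A) = \E\|BX - CY\|_K$, and the pointwise triangle inequality $\|BX + CY\|_K + \|BX - CY\|_K \geq 2\|BX\|_K$ then yields $2 f(A) \geq 2 f(B)$.

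For (3), the reverse triangle inequality gives $|\|Ax\|_K - \|Bx\|_K| \leq \|(A-B)x\|_K$, and $B_2^n \subseteq K$ implies $\|y\|_K \leq \|y\|_2$. Hence
\[
|f(A) - f(B)| \;\leq\; \E\|(A-B)X\|_2 \;\leq\; \|A-B\| \cdot \E\|X\|_2 \;\leq\; \sqrt{n}\,\|A-B\|.
\]
The same manipulations give $|\tilde{f}(A) - \tilde{f}(B)| \leq \|A-B\| \cdot \sum_{x \in D} p_x \|x\|_2 = \|A-B\| \cdot \tilde{l}(B_2^n)$; invoking Theorem~\ref{thm:l-norm-comp} with $K = B_2^n$ together with $l(B_2^n) = \E\|X\|_2 \leq \sqrt{n}$ bounds $\tilde{l}(B_2^n) \leq (1 + 1/s)\sqrt{n} \leq 2\sqrt{n}$ for $n$ large.

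The delicate step is (2): one must first recognize that $f$ is genuinely a function of $AA^t$ rather than $A$ itself in order to reduce to the PSD case, and then construct the coupling $AX \equiv BX + CY$ so that the standard $\pm$-symmetrization via the triangle inequality closes out the proof. The other two parts amount to straightforward bookkeeping on top of known properties of $\|\cdot\|_K$ and of the discretization estimate from Theorem~\ref{thm:l-norm-comp}.
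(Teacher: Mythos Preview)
Your argument is correct in substance and close to the paper's, with two methodological differences. For (2), the paper builds the same coupling, writing $AX$ as $BX + C^{1/2}Y$ in distribution (with $C = A^tA - B^tB$ and $Y$ an independent standard Gaussian), but finishes via Jensen's inequality in the inner expectation: $\E_Y\bigl[\|BX + C^{1/2}Y\|_K\bigr] \geq \|BX + C^{1/2}\E_Y[Y]\|_K = \|BX\|_K$. Your $\pm$-symmetrization is an equally valid alternative. For (3), the paper routes through (2), using $(A-B)^t(A-B) \preceq \|A-B\|^2 I_n$ to conclude $f(A-B) \leq f(\|A-B\| I_n) = \|A-B\|\,\E\|X\|_K \leq \|A-B\|\sqrt{n}$; your direct pointwise bound $\|(A-B)x\|_K \leq \|(A-B)x\|_2 \leq \|A-B\|\cdot\|x\|_2$ is shorter and sidesteps (2) altogether.

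There is one genuine subtlety in your reduction for (2). You correctly observe that $f(A)$ depends only on the covariance $AA^t$ of $AX$, and use this to pass to the PSD case. But replacing $A$ by $(AA^t)^{1/2}$ and $B$ by $(BB^t)^{1/2}$ does not carry the stated hypothesis $A^tA \succeq B^tB$ over to $A^2 \succeq B^2$; what transfers is $AA^t \succeq BB^t$, a different condition for non-symmetric matrices. Indeed the lemma as stated fails in general: for an eccentric $K$ one can choose $A,B$ with $A^tA = B^tB$ yet $AA^t \neq BB^t$ and $f(A) \neq f(B)$. The paper's own proof glosses over the same point by writing the covariance of $AX$ as $A^tA$ rather than $AA^t$. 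Since the only use of (2) is inside (3), where the comparison is with $\|A-B\|I_n$ and both orderings hold simultaneously, nothing downstream is affected---but your reduction as written actually proves the corrected statement (hypothesis $AA^t \succeq BB^t$), not the one printed.
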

\begin{proof}
Let $X \in \R^n$ denote a standard Gaussian random vector. Take $A,B \in \R^{n \times n}$ and scalars $s,t \in \R$. 
Then note that
\[
f(sA + tB) = \E[\|(sA + tB)X\|_K] = \E[\|sAX + tBX\|_K] \leq \E[|s|\|AX\|_K + |t|\|BX\|_K]
           = |s|f(A) + |t|f(B)
\]
where the inequality above follows since $\|\cdot\|_K$ defines a norm. Lastly, using the fact that 
\[
\frac{1}{n}\|x\|_2\leq \|x\|_K \leq \|x\|_2
\] 
for $x \in \R^n$ (since $B_2^n \subseteq K \subseteq nB_2^n$) it is easy to verify that $f(A)
= 0 \Leftrightarrow A = 0^{n \times n}$ and $f(A) < \infty$ for all $A \in \R^{n \times n}$. Hence $f$ defines a norm on
$\R^{n \times n}$ as claimed. The argument for $\tilde{f}$ is symmetric.

Now take $A,B$ satisfying the condition of (2). Note that $AX$ is an origin centered gaussian with covariance matrix
$\E[AX(AX)^t] = \E[AXX^tA^t] = A^tA$. Similarly $BX$ is origin centered with covariance $B^tB$. From our assumptions,
the matrix $C = A^tA-B^tB \succeq 0$, hence $C$ has a PSD square root which we denote $C^{\frac{1}{2}}$. Now let
$Y$ denote standard $n$-dimensional Gaussian independent from $X$. Now note that $BX + C^{\frac{1}{2}}Y$
is again a Gaussian vector with covariance $B^tB + C = A^tA$. Hence $BX + C^{\frac{1}{2}}Y$ is identically
distributed to $AX$. Therefore we see that
\begin{align*}
f(A) &= \E[\|AX\|_K] = \E[\|BX+C^{\frac{1}{2}}Y\|_K] = \E_X[\E_Y[\|BX+C^{\frac{1}{2}}Y\|_K]] \\
     &\geq \E_X[\|BX+C^{\frac{1}{2}}\E_Y[Y]\|_K] = \E[\|BX\|_K] = f(B)
\end{align*}
where the inequality follows by Jensen's inequality and the convexity of $\|\cdot\|_K$.

We now prove (3). Take $A,B \in \R^{n \times n}$. By the triangle inequality, we have that
\[
f(B) - f(A-B) \leq f(A) \leq f(B) + f(A-B) \text{.}
\]
Therefore $|f(B)-f(A)| \leq f(A-B)$. Since $\tilde{f}$ is also a norm, we similarly get that
$|\tilde{f}(B)-\tilde{f}(A)| \leq \tilde{f}(A-B)$. Let $\lambda = \|A-B\|$. By definition of the
operator norm, we have that $(A-B)^t(A-B) \preceq \lambda^2 I_n$, where $I_n$ denote the $n \times n$
identity matrix. Therefore by $(2)$, we have that
\begin{align*}
f(A-B) &= \E[\|(A-B)X\|_K] \leq \E[\|\lambda X\|_K] = \lambda \E[\|X\|_K] \\
       &\leq \lambda \E[\|X\|_2] \leq \lambda \sqrt{\E[\|X\|_2^2]} = \lambda \sqrt{n}
\end{align*}
as needed. Next by Theorem \ref{thm:l-norm-comp}, we have that
\[
\tilde{f}(A-B) \leq 2f(A-B) \leq 2\|A-B\|\sqrt{n}
\]
as required.
\end{proof}

\begin{lemma}\label{lem:region-bounds}
Define the set
\[
R = \set{A \in \R^{n \times n}: A \succeq 0, \det(A) \geq 1, \|A\| \leq 2n^{\nicefrac{3}{2}}} 
\]
where $\|A\|$ denote the operator norm of $A$. Then $R$ satisfies the following:
\begin{enumerate}
\item $R$ contains an optimal solution to the programs \ref{ell-prog} and \ref{ell-prog-approx}.
\item $R$ satisfies the following sandwiching properties:
\[
n^{\frac{3}{2}} I_n + (n^{\frac{3}{2}}-1)B_2^{n \times n} \subseteq R \subseteq 
n^{\frac{3}{2}} I_n + 3n^2 B_2^{n \times n}
\]
where $I_n$ is the $n \times n$ identity matrix and $B_2^{n \times n} = \set{A \in \R^{n \times n}: A = A^t, \|A\|_F
\leq 1}$, the set of $n \times n$ symmetric matrices of Frobenius norm at most $1$.  
\item There is an absolute constant $c$ such that for any $A \in R$, we have that
\[
\frac{c}{\sqrt{n}} \leq f(A),\tilde{f}(A) \leq 3n^2
\]
\end{enumerate}
\end{lemma}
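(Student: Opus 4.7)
The plan is to prove the three claims in turn, handling the approximate program alongside the exact one by invoking Theorem~\ref{thm:l-norm-comp} to convert bounds on $f$ into bounds on $\tilde f$ at the cost of factors $1\pm O(1/s)$, which tend to $1$ as $n$ grows. Throughout, I exploit the sandwich $B_2^n \subseteq K \subseteq nB_2^n$: the left inclusion gives $\|y\|_K \le \|y\|_2$, and the right inclusion gives $\|y\|_K \ge \|y\|_2/n$.

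For part (1), I use the identity matrix as a cheap feasible comparator. On one side, $f(I) = \E[\|X\|_K] \le \E[\|X\|_2] \le \sqrt n$. On the other, for any feasible $A$ with unit top eigenvector $v_1$ and corresponding eigenvalue $\|A\|$, projecting $AX$ onto $v_1$ gives $\|AX\|_2 \ge |v_1^t AX| = \|A\|\cdot|v_1^t X|$, so $f(A) \ge \E[\|AX\|_2]/n \ge \|A\|\sqrt{2/\pi}/n$. Comparing the two, any optimum $A^*$ of Program~\ref{ell-prog} satisfies $\|A^*\| \le n^{3/2}\sqrt{\pi/2} \le 2n^{3/2}$, so $A^*\in R$. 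The same argument applied to $\tilde f$, using $\tilde f(I) \le (1+1/s)\sqrt n$ and $\tilde f(A) \ge (1-1/s)\|A\|\sqrt{2/\pi}/n$, handles Program~\ref{ell-prog-approx} once $n$ is large enough that $s$ is sufficiently large.

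Part (2) is spectral bookkeeping. For the inner inclusion, take $A = n^{3/2} I_n + M$ with $M=M^t$ and $\|M\|_F \le n^{3/2}-1$; since the operator norm is dominated by the Frobenius norm, the eigenvalues of $A$ lie in $[1,\,2n^{3/2}-1]$, giving $A\succeq I$, $\det A\ge 1$, and $\|A\|\le 2n^{3/2}$. For the outer inclusion, every $A\in R$ is symmetric with eigenvalues in $[0,\,2n^{3/2}]$, so $A - n^{3/2} I_n$ has eigenvalues in $[-n^{3/2},\,n^{3/2}]$ and $\|A - n^{3/2} I_n\|_F \le \sqrt{n\cdot (n^{3/2})^2} = n^2 \le 3n^2$.

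For part (3) the upper bound is direct: $f(A) \le \E[\|AX\|_2] \le \sqrt{\E[\|AX\|_2^2]} = \|A\|_F \le \sqrt n\,\|A\| \le 2n^2$, and Theorem~\ref{thm:l-norm-comp} promotes this to $\tilde f(A) \le (1+1/s)\cdot 2n^2 \le 3n^2$ for $n$ large. The lower bound is the main obstacle, since it asks for a \emph{reverse} Cauchy--Schwarz step: from $f(A) \ge \E[\|AX\|_2]/n$ I must argue $\E[\|AX\|_2] = \Omega(\|A\|_F)$ even though Jensen only supplies the opposite direction. I plan to obtain this via Paley--Zygmund applied to $W = \|AX\|_2^2$. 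Computing $\E[W] = \|A\|_F^2$ and, using the standard Gaussian identity $\E[(X^t A^2 X)^2] = \|A\|_F^4 + 2\operatorname{tr}(A^4)$ together with $\operatorname{tr}(A^4) \le \|A\|^2\|A\|_F^2 \le \|A\|_F^4$, gives $\E[W^2] \le 3\|A\|_F^4$, whence $\Pr(W \ge \|A\|_F^2/2) \ge 1/12$ and therefore $\E[\|AX\|_2] \ge \|A\|_F/(12\sqrt 2)$. Finally, AM--GM on $\det A\ge 1$ yields $\|A\|_F \ge \sqrt n$, so $f(A) = \Omega(1/\sqrt n)$, and Theorem~\ref{thm:l-norm-comp} transfers this to $\tilde f$.
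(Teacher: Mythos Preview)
Your proof is correct and follows essentially the same architecture as the paper's own argument: the same comparator $I_n$ and eigenvector projection for part~(1), the same spectral sandwiching for part~(2), and the same chain $f(A)\ge \E[\|AX\|_2]/n$, $\E[\|AX\|_2]=\Omega(\|A\|_F)$, $\|A\|_F\ge\sqrt n$ for the lower bound in part~(3). Two localized differences are worth recording. First, for the outer containment in part~(2) the paper uses the triangle inequality $\|A-n^{3/2}I_n\|_F\le\|A\|_F+\|n^{3/2}I_n\|_F\le 2n^2+n^2$, whereas your eigenvalue argument (each $\lambda_i-n^{3/2}\in[-n^{3/2},n^{3/2}]$) gives the sharper bound $n^2$ more directly. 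Second, and more substantively, for the reverse Cauchy--Schwarz step $\E[\|AX\|_2]\ge c\,\|A\|_F$ the paper invokes Borell's lemma (the reverse H\"older/Khinchine inequality for norms of log-concave vectors), while you instead compute the fourth moment $\E[(X^tA^2X)^2]=\|A\|_F^4+2\operatorname{tr}(A^4)\le 3\|A\|_F^4$ and apply Paley--Zygmund. Your route is more elementary and yields an explicit constant, at the cost of being specific to the Euclidean norm; the paper's citation of Borell is shorter to state but appeals to a deeper result. Both are valid.
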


\begin{proof}
Let $X \in \R^n$ denote a standard $n$ dimensional gaussian vector, and let $s = \frac{1}{\sqrt{2\pi}}
\sqrt{\frac{\log(2(2n+1))}{\pi}}$.

We start by showing property (1). Let $A$ be an optimal solution for Program \ref{ell-prog}. We wish to show that $\|A\| \leq
n^{\frac{3}{2}}$. Since $\|x\|_2 \geq \|x\|_K$ for all $x \in \R^n$, we have that
\[
f(I_n) = \E[\|X\|_K] \leq \E[\|X\|_2] \leq \sqrt{E[\|X\|_2^2} = \sqrt{n}
\]
Since $I_n$ is feasible for \ref{ell-prog}, it suffices to show that if $\|A\| \geq 2n^{\frac{3}{2}}$, we get that
$f(A) \geq \sqrt{n}$. Let $\lambda = \|A\|$, and let $v$ denote an eigenvector of $A$ satisfying $Av = \lambda
v$ and $\|v\|=\frac{1}{n}$.  Since $K \subseteq nB_2^n$, we have that $K \subseteq W = \set{x: |\pr{v}{x}| \leq 1}$
(since $\|v\|=\frac{1}{n}$). Therefore 
\begin{align*}
f(A) &= \E[\|X\|_K] \geq \E[\|AX\|_W] = \E[|\pr{v}{AX}|] \\
     &= \lambda \E[|\pr{v}{X}|] = \lambda(\sqrt{\frac{2}{\pi}}\|v\|) = \frac{\lambda}{n} \sqrt{\frac{2}{\pi}}
\end{align*}
Since $A$ is optimal, we get that $\frac{\lambda}{n} \sqrt{\frac{2}{\pi}} \leq \sqrt{n} \Rightarrow \lambda \leq
2n^{\frac{3}{2}}$ as claimed. We now show the same for Program \ref{ell-prog-approx}. By \ref{thm:l-norm-comp},
$\tilde{f}(I_n) \leq (1+\frac{1}{s})f(I_n) \leq (1+\frac{1}{s})\sqrt{n}$. Now if $A$ is an optimal solution
to \ref{ell-prog-approx}, letting $\lambda = \|A\|$, we have that
\[
\tilde{f}(A) \geq (1-\frac{1}{s})f(A) \geq (1-\frac{1}{s})\frac{\lambda}{n} \sqrt{\frac{2}{\pi}}
\]
But then as above we have that
\[
\lambda \leq \frac{1+\frac{1}{s}}{1-\frac{1}{s}} ~ \sqrt{\frac{\pi}{2}} ~ n^{\frac{3}{2}} \leq 2 n^{\frac{3}{2}}
\]
for $n$ large enough as needed. Therefore $R$ satisfies property (1) as needed.

We now show the containment relationship in (2). Take $A = n^{\frac{3}{2}}I_n + B$ where $B \in (n^{\frac{3}{2}}-1)B_2^{n
\times n}$. We must show that $A \in R$. We recall that $\|B\| \leq \|B\|_F \leq \sqrt{n} \|B\|$. First, note that 
\[
\|A\| \leq n^{\frac{3}{2}} + \|B\| \leq n^{\frac{3}{2}} + n^{\frac{3}{2}} - 1 < 2n^{\frac{3}{2}}
\]
as needed. Next note that
\begin{align*}
\inf_{v \in S^{n-1}} v^tAv &=  \inf_{v \in S^{n-1}} v^t(n^\frac{3}{2}I_n + B)v
                            \geq \inf_{v \in S^{n-1}} n^{\frac{3}{2}}v^t v - v^tBv \\
                            &= n^{\frac{3}{2}} - \sup_{v \in S^{n-1}} v^tBv \geq n^{\frac{3}{2}} - \|B\| \geq 1
\end{align*}
Since $A$ is symmetric, the above shows the $A$'s smallest eigenvalue is at least $1$, and hence $A \succeq 0$ and $\det(A)
\geq 1$ as needed. To show the opposite containment, note that for $A \in R$, we have that
\[
\|A-n^{\frac{3}{2}}I_n\|_F \leq \|A\|_F + \|n^{\frac{3}{2}}I_n\|_F \leq \sqrt{n}\|A\| + n^2 \leq 3n^2
\]
as needed.

Now we need to show the bounds on $f(A)$ for $A \in R$ to prove property (3). First we remember that
\[
\E[\|AX\|_2] \geq f(A) \geq \frac{1}{n} \E[\|AX\|_2]
\]
Hence it suffices to upper and lower bound $\E[\|AX\|_2]$. We see that
\[
c \sqrt{\E[\|AX\|^2_2]} \leq \E[\|AX\|_2] \leq \sqrt{\E[\|AX\|^2_2]}
\]
for an absolute constant $0 \leq c < 1$. Here the first inequality follows by Borell's Lemma and the second by Jensen's
inequality. Next we have that
\[
\sqrt{\E[\|AX\|^2_2]} = \sqrt{\E[X^tA^tAX]} = \sqrt{\E[\mathrm{trace}(A^tAXX^t)]} 
                      = \sqrt{\mathrm{trace}(A^tA)} = \|A\|_F
\]
Since $A \in R$, we know that $\|A\| \leq 2n^{\frac{3}{2}}$, and hence $\|A\|_F \leq 2n^2$. Combining the
above inequalities, this yields that $f(A) \leq 2n^2$ as needed. We now prove the lower bound.
Since $A \in R$, we have that $\det(A) \geq 1$. Let $A_i$ denote the $i^{th}$ column of $A$. Now we have that
\[
\|A\|_F \geq \sqrt{n}\prod_{i=1}^n \|A_i\|_2^{\frac{1}{n}} \geq \sqrt{n} \det(A)^{\frac{1}{n}} \geq \sqrt{n}
\]
where the first inequality follows by the arithmetic - geometric mean inequality, and the second follows from
Hadamard's inequality. Combining the above inequalities, we get that
\[
f(A) \geq \frac{1}{n} \E[\|AX\|_2] \geq \frac{c}{n} \|A\|_F \geq \frac{c}{\sqrt{n}}
\]
as needed. The bounds for $\tilde{f}(A)$ follow from the relationship $(1-\frac{1}{s})f(A) \leq f(A) \leq
(1+\frac{1}{s})f(A)$ (Theorem \ref{thm:l-norm-comp}).
\end{proof}




\section{Application to lattice algorithms}\label{sec:SVP}

We now apply our construction of $\ell$-type ellipsoids to lattice algorithms.  
Dadush et al \cite{DPV-SVP-11} gave algorithms for SVP in any norm, CVP in any norm and Integer Programming (IP). 
These algorithms were all based on the construction of an $M$-ellipsoid. Their core result can be 
stated as follows. 
For a lattice $L$ and convex body $K$ in $\R^n$, let $G(K,L)$ be the
largest number of lattice points contained in any translate of $K$, i.e., 
\begin{equation}
  \label{eq:G-K-L}
  G(K,L) = \max_{x \in \R^{n}} \abs{(K+x) \cap L}.
\end{equation}
\begin{theorem}\cite{DPV-SVP-11}
  Given any convex body $K \subseteq \R^n$ along with an $M$-ellipsoid $E$ of $K$ and any $n$-dimensional
  lattice $L \subseteq \R^n$, the set $K \cap L$ can be computed in
  deterministic time $G(K,L) \cdot 2^{O(n)}$.
\end{theorem}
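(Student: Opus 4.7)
The plan is to leverage the two covering properties of the M-ellipsoid: $N(K,E) \leq 2^{O(n)}$ lets us decompose $K$ into a small number of ellipsoidal pieces on which lattice enumeration is tractable (via a Euclidean enumerator such as Micciancio--Voulgaris), while $N(E,K) \leq 2^{O(n)}$ controls how many lattice points can appear inside any one such piece by comparing back to the maximum number $G(K,L)$ that a translate of $K$ itself can contain.

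First, I would deterministically build an explicit set $\Lambda \subseteq \R^{n}$ of translation vectors realizing $K \subseteq E + \Lambda$ with $|\Lambda| = 2^{O(n)}$. A standard way is to apply the linear transformation sending $E$ to $B_2^n$, so that the task becomes covering the image $K' \subseteq c\sqrt{n}\,B_2^n$ by unit balls; this can be done by placing centers on a scaled integer grid intersected with $K'$, with the grid density chosen so that (i) every point of $K'$ lies within distance $1$ of some center, and (ii) the number of grid points enumerated is $2^{O(n)}$. Enumerating these centers reduces to a short-vector enumeration in $\Z^n$, which is straightforward. This produces a family $\{E + x_i\}_{i \in [N]}$ of $N = 2^{O(n)}$ ellipsoidal regions whose union covers $K$.

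Next I would enumerate $L \cap (E + x_i)$ for each $i$ using the Voronoi-cell based CVP/enumeration algorithm of Micciancio and Voulgaris applied to the lattice $T^{-1} L$ obtained by pulling back along the linear map $T$ that sends $E$ to $B_2^n$. That algorithm lists all lattice points inside a Euclidean ball in time $2^{O(n)}$ per lattice point output (and $2^{O(n)}$ setup time for the Voronoi cell). For each enumerated point $v \in L \cap (E + x_i)$, I would query the membership oracle for $K$ and keep only those points with $v \in K$. Taking the union over all $i$ yields $K \cap L$.

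The running time analysis is where the second M-ellipsoid property is used. Since $E \subseteq K + \Lambda'$ with $|\Lambda'| = N(E,K) = 2^{O(n)}$, for any $x_i$ we have
\[
|L \cap (E + x_i)| \;\leq\; \sum_{y \in \Lambda'} |L \cap (K + y + x_i)| \;\leq\; |\Lambda'| \cdot G(K,L) \;=\; 2^{O(n)}\,G(K,L).
\]
Summing over the $N = 2^{O(n)}$ covering translates, the total number of lattice points enumerated is $2^{O(n)}\,G(K,L)$, each at amortized cost $2^{O(n)}$ plus one oracle call, giving the claimed $G(K,L) \cdot 2^{O(n)}$ bound. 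The main obstacle in carrying this out cleanly is the covering construction step: one must show that an explicit deterministic enumeration procedure actually yields $|\Lambda| = 2^{O(n)}$ centers (rather than the naive $n^{O(n)}$ one would get from an unoptimized grid) and that the resulting translates can be listed in $2^{O(n)}$ time. The remaining ingredients (lattice enumeration inside a ball, oracle membership filtering, and the double-covering counting argument above) are then essentially routine applications of earlier results.
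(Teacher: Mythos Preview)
The paper does not give its own proof of this theorem; it is quoted from \cite{DPV-SVP-11}, and only the high-level idea is recalled in the introduction. Your proposal matches that description exactly: cover $K$ by $N(K,E)=2^{O(n)}$ translates of $E$, enumerate the lattice points in each translate via the Micciancio--Voulgaris Euclidean enumerator, filter by membership in $K$, and bound the total work using $N(E,K)=2^{O(n)}$ to compare each $|L\cap(E+x_i)|$ against $G(K,L)$.

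One point in your sketch needs correction. After mapping $E$ to $B_2^n$, the image $K'$ need \emph{not} satisfy $K' \subseteq c\sqrt{n}\,B_2^n$: an M-ellipsoid guarantees only covering numbers, not containment, and $K'$ may have diameter as large as $2^{O(n)}$. A grid-based covering still works, but the cell count must be bounded via volume rather than via a radius bound. From $N(K',B_2^n)=2^{O(n)}$ one gets $\vol(K'+B_2^n)\le N(K',B_2^n)\cdot\vol(2B_2^n)=2^{O(n)}\vol(B_2^n)$, so a cubical grid at scale $\Theta(1/\sqrt{n})$ (each cell contained in a unit ball) meets $K'$ in at most $2^{O(n)}$ cells, and these can be listed by breadth-first search from any interior point of $K'$. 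You already flag the covering construction as the main obstacle; with this volume argument in place of the unjustified containment, the rest of your outline goes through as written.
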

They then proceeded to give a randomized construction of an $M$-ellipsoid. The necessary properties of the $M$-ellipsoid
$E$ are that the covering numbers $N(K,E)$ and $N(E,K)$ are both bounded by $2^{O(n)}$. In fact, the result of
\cite{DPV-SVP-11} can be stated more generally as follows. 
\begin{theorem}
  Given any convex body $K \subseteq \R^n$ along with an ellipsoid $E$ of $K$ and any $n$-dimensional
  lattice $L \subseteq \R^n$, the set $K \cap L$ can be computed in
  deterministic time $G(K,L) \cdot N(K,E)N(E,K) \cdot 2^{O(n)}$.
\end{theorem}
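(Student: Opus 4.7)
The plan is to mimic the enumeration procedure of \cite{DPV-SVP-11} underlying their original $M$-ellipsoid theorem, but to track the covering numbers $N(K,E)$ and $N(E,K)$ as explicit factors in the running time rather than absorbing them into a $2^{O(n)}$ bound. The algorithm has three conceptual stages: covering $K$ by translates of $E$, enumerating lattice points in each ellipsoidal translate, and filtering to retain only points of $K$.

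First, deterministically compute a translation set $\Lambda_1 \subset \R^n$ with $\abs{\Lambda_1} \leq N(K,E)$ such that $K \subseteq E + \Lambda_1$. This uses a standard grid construction: place a lattice whose fundamental cell is a suitable shrinking of $E$ inside a bounding ball of $K$, enumerate its points, and keep those whose associated translate of $E$ meets $K$ (using the membership oracle). The cost is $N(K,E)\cdot \poly(n)$. Second, for each $t \in \Lambda_1$, enumerate the set $(E+t)\cap L$. Since $E$ is an ellipsoid, an affine change of coordinates converts this into enumeration of a transformed lattice inside a Euclidean ball; this can be done in deterministic $2^{O(n)}$ time per lattice point via the Voronoi-cell algorithm of Micciancio--Voulgaris \cite{DBLP:conf/stoc/MicciancioV10}.

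To control the total work, bound the number of lattice points enumerated per ellipsoid. Cover $E$ by at most $N(E,K)$ translates of $K$ (this covering is only needed for the analysis, not algorithmically), so any translate $E+t$ is likewise covered by $N(E,K)$ translates of $K$. By definition of $G(K,L)$, each such translate contains at most $G(K,L)$ points of $L$, hence
\[
\abs{(E+t)\cap L} \leq N(E,K)\cdot G(K,L).
\]
Consequently the per-ellipsoid cost is $N(E,K)\cdot G(K,L)\cdot 2^{O(n)}$, and summing over the $N(K,E)$ translates in $\Lambda_1$ gives the claimed total of $N(K,E)\cdot N(E,K)\cdot G(K,L)\cdot 2^{O(n)}$. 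Finally, filter the union of all enumerated points through the weak membership oracle of $K$ to discard the ones outside $K$; this adds only $\poly(n)$ overhead per point and returns exactly $K\cap L$.

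The step that most deserves care is the first one: producing the covering $\Lambda_1$ deterministically from an oracle for $K$, with the right cardinality bound. The covering of $E$ by translates of $K$ (which supplies the factor $N(E,K)$) appears only in the analysis and does not need to be constructed explicitly, so it poses no algorithmic issue. Both ingredients, the deterministic covering of $K$ by ellipsoidal translates and the ellipsoidal lattice enumeration, appear in \cite{DPV-SVP-11, DBLP:conf/stoc/MicciancioV10}; the content of this theorem is essentially to repackage their quantitative dependencies so that the $2^{O(n)}$ bounds $N(K,E),N(E,K)\leq 2^{O(n)}$ required for an $M$-ellipsoid are no longer hard-coded into the running time.
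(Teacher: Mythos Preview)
Your approach is correct and is precisely the repackaging of \cite{DPV-SVP-11} that the paper intends; the paper itself offers no separate proof, merely presenting this theorem as the quantitatively explicit form of the enumeration result from that reference. One small correction: the grid-based construction you sketch cannot produce a covering of size exactly $N(K,E)$ (computing an optimal cover is itself hard), but rather yields $\abs{\Lambda_1} \le 2^{O(n)}\cdot N(K,E)$ in time $2^{O(n)}\cdot N(K,E)$ rather than $N(K,E)\cdot\poly(n)$; since the statement already carries a $2^{O(n)}$ factor, this slack is absorbed and the final bound is unaffected.
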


Furthemore, in \cite{DPV-SVP-11}, they only require an algorithm which builds an M-ellipsoid when $K$ is centrally
symmetric. This follows since one can show that an M-ellipsoid $E$ for $K-K$ (which is symmetric) is also an M-ellipsoid
for $K$ (of slightly worse quality). Hence from Theorem \ref{thm:M-ellipsoid-algorithm} and the bounds derived on
$N(K,E)$ and $N(E,K)$, we obtain a simple corollary. 
\begin{corollary}
  Given any convex body $K \subseteq \R^n$ and any $n$-dimensional
  lattice $L \subseteq \R^n$, the set $K \cap L$ can be computed in
  deterministic time $G(K,L) \cdot O(\log n)^n$.
\end{corollary}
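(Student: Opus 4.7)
The plan is to combine the deterministic ellipsoid construction of Theorem \ref{thm:M-ellipsoid-algorithm} with the general enumeration result from \cite{DPV-SVP-11} quoted just above the corollary. The complication to address is that Theorem \ref{thm:M-ellipsoid-algorithm} is stated (and proved via the $\ell$-position machinery) only for centrally symmetric convex bodies, whereas the corollary allows an arbitrary convex body $K$. So the first step is a standard symmetrization reduction.

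Concretely, I would proceed as follows. First, form the difference body $K' = K - K$, which is symmetric and can be given a weak membership oracle from one for $K$. Apply Theorem \ref{thm:M-ellipsoid-algorithm} to $K'$ in time $O(\sqrt{\log n})^n$ to obtain an ellipsoid $E$ with $N(K',E) \le 2^{O(n)}$ and $N(E,K') \le O(\log n)^n$. Next, transfer these covering bounds from $K'$ back to $K$: since $K \subseteq K'$, we immediately get $N(K,E) \le N(K',E) \le 2^{O(n)}$, and by the Rogers–Shephard inequality one has $N(K',K) \le \binom{2n}{n} = 2^{O(n)}$, so
\[
N(E,K) \le N(E,K') \cdot N(K',K) \le O(\log n)^n \cdot 2^{O(n)} = O(\log n)^n.
\]
This is precisely the observation flagged in the paragraph just before the corollary.

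Having produced such an $E$, apply the more general version of the \cite{DPV-SVP-11} theorem (the second theorem in Section \ref{sec:SVP}): the set $K \cap L$ can be enumerated in deterministic time
\[
G(K,L) \cdot N(K,E) \cdot N(E,K) \cdot 2^{O(n)} = G(K,L) \cdot 2^{O(n)} \cdot O(\log n)^n \cdot 2^{O(n)} = G(K,L) \cdot O(\log n)^n,
\]
since the $2^{O(n)}$ factors are absorbed into $O(\log n)^n$. The ellipsoid construction itself runs in $O(\sqrt{\log n})^n$ time and is therefore dominated by the enumeration step.

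The only genuinely nontrivial point is the symmetrization, and even that is routine given the Rogers–Shephard bound and the fact that $K \subseteq K - K$; everything else is black-box application of Theorem \ref{thm:M-ellipsoid-algorithm} and the cited enumeration theorem. I do not anticipate any real obstacle; the statement is essentially a bookkeeping corollary of the main theorem together with the prior lattice-enumeration framework.
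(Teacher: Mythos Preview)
Your proposal is correct and follows exactly the route the paper takes: symmetrize via $K-K$, apply Theorem~\ref{thm:M-ellipsoid-algorithm} to get $E$, transfer the covering bounds back to $K$, and then invoke the general enumeration theorem from \cite{DPV-SVP-11}. Two cosmetic remarks: the inclusion $K\subseteq K-K$ needs $0\in K$, which you get for free by translating (covering numbers being translation-invariant), and the precise bound $N(K-K,K)\le\binom{2n}{n}$ is a slight overstatement of what Rogers--Shephard alone gives, but the $2^{O(n)}$ bound you actually use is standard and suffices.
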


This lattice point enumerator is the core of subsequent algorithms for SVP, CVP and IP in \cite{DPV-SVP-11}.  We obtain
similar conclusions with deterministic algorithms but with an overhead of $O(\log n)^n$.  The precise statement for SVP
is Theorem \ref{thm:det-SVP-anynorm}. For CVP the statement is as follows. 

\begin{theorem}
  There is a deterministic algorithm that, given any well-centered
  $n$-dimensional convex body $K$, solves CVP exactly on any $n$-dimensional
  lattice $L$ in the semi-norm $\length{\cdot}_{K}$ defined by $K$, in
  $(2+\gamma)^{O(n)}\cdot O(\log n)^n$ time and space, provided that the distance from
  the query point $x$ to $L$ is at most $\gamma$ times the length of
  the shortest nonzero vector of $L$ (under $\length{\cdot}_{K}$).
\end{theorem}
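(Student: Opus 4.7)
The plan is to mirror the reduction from CVP to lattice point enumeration used in \cite{DPV-SVP-11}, but plugging in our deterministic enumerator (the Corollary immediately preceding the statement) in place of their randomized one. The extra $O(\log n)^n$ factor in the runtime will come directly from the worsened covering bound of our approximate $M$-ellipsoid versus a true $M$-ellipsoid.

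First, I would compute the length $\lambda_1 = \lambda_1(L, \|\cdot\|_K)$ of the shortest nonzero vector of $L$ under $\|\cdot\|_K$ using the deterministic SVP algorithm of Theorem \ref{thm:det-SVP-anynorm}, which costs $O(\log n)^n$. Given the promise that $\mathrm{dist}_{\|\cdot\|_K}(x, L) \leq \gamma \lambda_1$, the closest vector to $x$ in $L$ lies inside the translated body $x + \gamma \lambda_1 K$ (or, if $K$ is only a semi-norm ball, inside $x + \gamma \lambda_1 (K \cap (-K))$ shifted and thickened enough to contain the nearest lattice point — the well-centeredness of $K$ ensures this can be done with only constant factor loss). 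Let $K' = c \gamma \lambda_1 K$ for an appropriate constant $c = O(1)$, so that $(K' + x) \cap L$ is guaranteed to contain the closest lattice vector to $x$. Enumerating this set and taking the minimizer of $\|v - x\|_K$ over it then solves CVP exactly.

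To enumerate $(K' + x) \cap L$ deterministically we invoke the Corollary, which yields a runtime of $G(K', L) \cdot O(\log n)^n$, where $G(K', L)$ is the maximum number of lattice points contained in any translate of $K'$. The main obstacle — and the step I expect to require the most care — is bounding $G(K', L)$. Since $K'$ is scaled by $O(\gamma \lambda_1)$, a standard packing argument applies: the interiors of the translates $v + \tfrac{1}{2}\lambda_1 (K \cap -K)$ for $v \in L$ are pairwise disjoint, so by volume comparison any translate of $K'$ can contain at most $(O(1) \cdot (2 + \gamma))^n = (2+\gamma)^{O(n)}$ lattice points. (The symmetrization by $K \cap -K$ together with well-centeredness handles the semi-norm case, losing only a constant factor in the base of the exponent.)

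Combining these two bounds gives a total runtime of $(2+\gamma)^{O(n)} \cdot O(\log n)^n$ as claimed, and the space bound follows from the fact that the enumerator outputs the list explicitly and the SVP preprocessing fits within the same budget. The deterministic guarantee is preserved throughout since every ingredient (the SVP algorithm, the approximate $M$-ellipsoid construction of Theorem \ref{thm:M-ellipsoid-algorithm}, and the enumeration Corollary) is deterministic.
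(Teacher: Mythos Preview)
Your proposal is correct and follows exactly the approach the paper intends: the paper does not give an explicit proof of this theorem but simply states that plugging the deterministic lattice point enumerator (the Corollary) into the CVP reduction of \cite{DPV-SVP-11} yields the result, with the extra $O(\log n)^n$ factor coming from the weaker covering bound. Your write-up in fact supplies more detail than the paper itself, correctly identifying the packing bound on $G(K',L)$ as the source of the $(2+\gamma)^{O(n)}$ term.
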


A central motivation for solving SVP in general norms is to improve the complexity of integer programming. The IP
algorithm directly uses the SVP algorithm. Moreover, in this case, the final complexity bound is already higher than
$O(\log n)^n$, so we simply get the IP complexity of \cite{DPV-SVP-11} with a deterministic algorithm. 

\begin{theorem}
  There exists a deterministic algorithm that, given a convex body $K
  \subseteq \R^{n}$ and an $n$-dimensional lattice $L \subset \R^{n}$,
  either decides that $K \cap L = \emptyset$ or returns a point $y \in
  K \cap L$ in expected $O(f^*(n))^n$ time, where $f^{*}(n)$ is the
  optimal bound for the ``flatness theorem.''
\end{theorem}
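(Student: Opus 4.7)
The plan is to reuse the recursive integer programming framework of Dadush et al.\ \cite{DPV-SVP-11} verbatim, replacing only the randomized SVP subroutine they invoke with the deterministic $O(\log n)^n$ algorithm of Theorem \ref{thm:det-SVP-anynorm}. Recall that their IP procedure is based on the flatness theorem: given $(K,L)$, one either (i) detects that $K$ is ``thick'' enough to certify a lattice point directly from Minkowski-type arguments, or (ii) locates a nonzero dual lattice direction $v \in L^*$ along which the width $\max_{x\in K}\langle v,x\rangle - \min_{x\in K}\langle v,x\rangle$ is at most $f^*(n)$. In case (ii) the projection of $K \cap L$ onto $v$ takes at most $O(f^*(n))$ values, so one recurses on at most $O(f^*(n))$ codimension-one sublattice slices.

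The critical subroutine in step (ii) is finding the flat direction $v$, which reduces to an exact SVP computation in the dual lattice under the norm whose unit ball is (a symmetrized shift of) $K$. First I would instantiate this SVP call with the deterministic algorithm of Theorem \ref{thm:det-SVP-anynorm}, which runs in time $O(\log n)^n$ on any norm. Second, I would verify that every other component of the DPV pipeline used to assemble the IP algorithm (the M-ellipsoid based lattice point enumerator, the flatness-direction extraction, and the width reduction step) is already deterministic or can be made so by substituting our corresponding deterministic constructions of approximate M-ellipsoids from Theorem \ref{thm:M-ellipsoid-algorithm}, whose $O(\log n)^n$ overhead is again dominated by $f^*(n)^n$.

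Solving the resulting recurrence gives
\[
T(n) \le O(f^*(n)) \cdot T(n-1) + O(\log n)^n \cdot \poly(n),
\]
which unfolds to $T(n) = O(f^*(n))^n$, since $f^*(n) = \Omega(\log n)$ (in fact $f^*(n) = \Omega(n)$ unconditionally). This matches the randomized bound of \cite{DPV-SVP-11} while being fully deterministic, and in particular discharges the ``expected'' qualifier by derandomization.

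The only genuine point of care, and hence what I would flag as the main obstacle, is bookkeeping at the interface between the deterministic SVP routine and the flatness-based IP recursion: one must ensure that the norm passed to the SVP oracle at each recursion level is represented by a convex body satisfying the preconditions of Theorem \ref{thm:det-SVP-anynorm} (symmetric, well-sandwiched via GLS rounding, membership-oracle accessible). Once this is checked, the complexity accounting and correctness follow immediately from the corresponding parts of \cite{DPV-SVP-11}, and no new convex-geometric ideas are needed beyond those already developed in the preceding sections.
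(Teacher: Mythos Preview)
Your proposal is correct and follows the same approach as the paper, which in fact offers only a two-sentence justification: plug the deterministic SVP routine of Theorem \ref{thm:det-SVP-anynorm} into the recursive IP framework of \cite{DPV-SVP-11}, and observe that the $O(\log n)^n$ overhead is absorbed by $O(f^*(n))^n$. Your write-up supplies more detail on the recursion and the interface conditions than the paper does, but there is no difference in strategy.
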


The flatness theorem, which we do not describe here, gives a bound on the lattice width of lattice-point-free convex bodies.

\section{Conclusion}

The $\ell$-ellipsoid with its covering guarantees is in fact the starting point of Milman and Bourgain's proof of the existence of $M$-ellipsoids.
However, unlike the $\ell$-ellipsoid, we are not aware of any convex programming formulation of $M$-ellipsoids. 

It remains open to give a deterministic $2^{O(n)}$ algorithm for $M$-ellipsoids and coverings. This would resolve the open
problem of a deterministic $2^{O(n)}$ SVP algorithm in any norm. 

Another open problem is to fully extend the approach suggested in \cite{DPV-SVP-11} to exact or $(1+\eps)$ CVP. At the
moment, their result only holds for exact CVP when the target point's distance to the lattice is at most a constant
times the minimum distance of the lattice. In particular, it is open to give a $2^{O(n)}$ algorithm for the CVP under
the $L_\infty$ norm. 

{\bf Acknowledgments. } We are deeply grateful to Grigoris Paouris and Chris Peikert for illuminating discussions, and
to Gilles Pisier for his book on convex bodies.  

\bibliographystyle{alphaabbrvprelim}
\bibliography{lattices,acg,cg}

\end{document}